\newcommand{\Var}{\operatorname{Var}}
\newcommand{\eps}{\ensuremath{\epsilon}}
\DeclareMathOperator*{\E}{\mathbb{E}}
\newcommand{\R}{\mathbb{R}}
\newcommand{\norm}[1]{\|#1\|}
\newcommand{\specialcell}[2][c]{%
  \begin{tabular}[#1]{@{}c@{}}#2\end{tabular}}
\newtheorem*{rep@theorem}{\rep@title}
\newcommand{\newreptheorem}[2]{%
\newenvironment{rep#1}[1]{%
 \def\rep@title{#2 \ref{##1}}%
 \begin{rep@theorem}}%
 {\end{rep@theorem}}}
\newtheorem{theorem}{Theorem}
\newtheorem{corollary}[theorem]{Corollary}
\newtheorem{lemma}[theorem]{Lemma}
\newtheorem*{lemma*}{Lemma}
\newtheorem{claim}[theorem]{Claim}
\newtheorem{definition}[theorem]{Definition}
\date{}
	\author{
		Arturs Backurs\footnote{Toyota Technological Institute at Chicago. \texttt{backurs@ttic.edu}}
		\and Piotr Indyk\footnote{Massachusetts Institute of Technology. \texttt{indyk@mit.edu}}
		\and Cameron Musco\footnote{University of Massachusetts Amherst. \texttt{cmusco@cs.umass.edu}}
		\and Tal Wagner\footnote{Microsoft Research. \texttt{talw@mit.edu}}
	}
\title{Faster Kernel Matrix Algebra via Density Estimation}
\begin{document}

\maketitle

\begin{abstract}
We study fast algorithms for computing fundamental properties of a positive semidefinite kernel matrix $K \in \R^{n \times n}$ corresponding to $n$ points $x_1,\ldots,x_n \in \R^d$. In particular, we consider estimating the sum of kernel matrix entries, along with its top eigenvalue and eigenvector. 

We show that the sum of matrix entries can be estimated to $1+\epsilon$ relative error in time {\em sublinear} in $n$ and linear in $d$ for many popular kernels, including the Gaussian, exponential, and rational quadratic kernels. For these kernels, we also show that the top eigenvalue (and an approximate eigenvector) can be approximated to $1+\epsilon$ relative error in time \emph{subquadratic} in $n$ and linear in $d$. 

Our algorithms represent significant advances in the best known runtimes for these problems.
They leverage the positive definiteness of the kernel matrix, along with a recent line of work on efficient kernel density estimation.
\end{abstract}

\section{Introduction}

Kernels are a ubiquitous notion in statistics, machine learning, and other fields. A kernel is a function $k: \R^d\times \R^d \to \R$ that measures the similarity\footnote{This should be contrasted with {\em distance} functions that measure the {\em dissimilarity} between two vectors.}  between two $d$-dimensional vectors. Many statistical and machine learning methods, such as support vector machines, kernel ridge  regression and  kernel density estimation, rely on appropriate choices of kernels. A prominent example of a kernel function is the Radial Basis Function, a.k.a.\ Gaussian kernel, defined as
\begin{align*}
 k(x,y) = \exp(-\|x-y\|^2).
\end{align*}
Other popular choices include the Laplace kernel, exponential kernel, etc. See \citep{shawe2004kernel,hofmann2008kernel} for an overview.

Kernel methods typically operate using a kernel matrix. Given $n$ vectors $x_1, \ldots, x_n \in \R^d$, the kernel matrix $K \in \R^{n \times n}$ is defined as $K_{i,j}=k(x_i,x_j)$. For most popular kernels, e.g., the Gaussian kernel, $k$ is a positive definite function, and so $K$ is positive semidefinite (PSD). Furthermore, it is often the case that $K$'s entries are in the range [0,1], with $1$'s on the diagonal -- we assume this throughout. 

Although popular, the main drawback of kernel methods is their efficiency. Most kernel-based algorithms have running times that are at least quadratic in $n$; in fact, many start by explicitly materializing the kernel matrix $K$ in preprocessing.  This quadratic runtime is likely necessary as long as exact (or high-precision) answers are desired. Consider perhaps the simplest kernel problem, where the goal is to compute the sum of matrix entries, i.e.,  $s(K)=\sum_{i,j} K_{i,j}$. It was shown in \citet{backurs2017fine} that, for the Gaussian kernel, computing $s(K)$ up to $1 +\epsilon$ relative error requires $n^{2-o(1)}$ time under the Strong Exponential Time Hypothesis (SETH), as long as  the dimension $d$ is at least polylogarithmic in $n$, and $\epsilon=\exp(-\omega(\log^2 n))$. The same limitations were shown to apply  to kernel support vector machines, kernel ridge regression, and other kernel problems. 

Fortunately, the aforementioned lower bound does not preclude faster algorithms for larger values of $\epsilon$ (say, $\epsilon=\Theta(1)$). Over the last decade many such algorithms have been proposed. In our context, the most relevant ones are those solving kernel density evaluation~\citep{charikar2017hashing,backurs2018efficient,backurs2019space,siminelakis2019rehashing,charikar2020kernel}. Here,  we are given two sets of vectors $X=\{x_1, \ldots, x_m\}$, and $Y = \{y_1, \ldots, y_m$\}, and the goal is to compute the values of $k(y_i)= \frac{1}{m} \sum_j k(x_j,y_i)$ for $i=1, \ldots, m$. For the Gaussian kernel, the best known algorithm, due to \citet{charikar2020kernel}, estimates these values to $1 +\epsilon$ relative error in time $O(dm /(\mu^{0.173+o(1)}\epsilon^2))$, where $\mu$ is a lower bound on $k(x_i)$. 

Applying this algorithm directly to approximating the Gaussian kernel  sum yields a runtime of roughly $O(dn^{1.173+o(1)}/\epsilon^{2})$, since we can set $\mu = 1/n$ 
and still achieve $(1+\epsilon)$ approximation as $k(x_i,x_i)=1$ for all $x_i$. It is a natural question whether this bound can be improved and if progress on fast kernel density estimation can be extended to other fundamental kernel matrix problems.

\paragraph{Our results} 
In this paper we give much faster algorithms for approximating two fundamental quantities: the kernel matrix sum and the kernel matrix top eigenvector/eigenvalue. 
Consider a kernel matrix $K$ induced by $n$ points $x_1,\ldots,x_n \in \R^d$ and a kernel function with values in $[0,1]$ such that the matrix $K$ is PSD and has $1$'s on the diagonal. Furthermore, suppose that the kernel $k$ is supported by a kernel density evaluation algorithm with running time of the form $O(dm/(\mu^p \epsilon^2))$ for $m$ points, relative error $(1+\epsilon)$, and density lower bound $\mu$. Then we give: 
\begin{enumerate}
\item An algorithm for $(1+\epsilon)$-approximating $s(K)$ in time:
\begin{align*}
O \left (dn^{2 + 5p \over 4+2p}/\epsilon^{8+6p \over 2+p} \cdot \log^2 n \right).
\end{align*}
For many popular kernels the above runtime is {\em sublinear} in $n$ -- see Table \ref{tab:main}. Our algorithm is based on subsampling $O(\sqrt{n})$ points from $x_1, \ldots, x_n$ and then applying fast kernel density evaluation to these points.
We complement the algorithm with a (very simple) lower bound showing that sampling $\Omega(\sqrt{n})$ points 
 is necessary to estimate $s(K)$ up to a constant factor. This shows that our 
  {\em sampling} complexity is optimal. 

\item An algorithm that returns an approximate top eigenvector $z \in \R^{n}$ with $\norm{z}_2 = 1$ and $z^T K z \ge (1-\epsilon) \cdot \lambda_1(K)$, where $\lambda_1(K)$ is $K$'s top eigenvalue, running  in time:
\begin{align*}
O \left(\frac{dn^{1+p} \log(n/\epsilon)^{2+p}}{\epsilon^{7+4p}} \right ).
\end{align*}
For many popular kernels, this runtime is  subquadratic in $n$ -- see Table \ref{tab:main}. This is the first  subquadratic time algorithm for top eigenvalue approximation and a major improvement over forming the full kernel matrix. By a simple argument, $\Omega(dn)$ time is necessary even for constant factor approximation, and thus our algorithm is within an $\tilde O(n^p)$ factor of optimal. Our algorithm is also simple and practical, significantly outperforming baseline methods empirically -- see Section \ref{sec:exp}.
%
%
\end{enumerate}
%
\begin{table}[!h]
\footnotesize
\begin{center}
\begin{tabular}{|c|c|c|c|c|c|}
\hline
Kernel & $k(x,y)$ &KDE algorithm & KDE runtime & Kernel Sum & Top Eigenvector \\
\hline
 Gaussian  & $e^{-\|x-y\|_2^2}$ & \citet{charikar2020kernel}  & $dm /(\mu^{0.173+o(1)})$ &  $\tilde{O}(d n^{0.66})$ & $\tilde O(d n^{1.173+o(1)})$ \\
 Gaussian  &  & \specialcell{\citet{greengard1991fast} \\(see appendix)}  & $m \log(m)^{O(d)}$ & $n^{0.5} \log(n)^{O(d)}$ & $n \log(n)^{O(d)}$ \\
 Exponential & $e^{-\|x-y\|_2}$ & \citet{charikar2020kernel}  & $dm /(\mu^{0.1+o(1)})$ & $\tilde{O}(d n^{0.6})$ & $\tilde O(d n^{1.1+o(1)})$ \\
  Rational quadratic & $\frac{1}{ (1+\|x-y\|_2^2)^{\beta}}$ & \citet{backurs2018efficient} & $d$ & $\tilde{O}(d n^{0.5})$ & $\tilde O(dn)$ \\
  \specialcell{All of the above\\ (lower bound)}  & - & - & - & $\Omega(d n^{0.5})$ & $\Omega(dn)$ \\
\hline
\end{tabular}
\end{center}
\normalsize
\caption{Instantiations of our main results, giving sublinear time kernel sum approximation and subquadratic time top eigenvector approximation. All running times are up to  polylogarithmic factors  assuming constant accuracy parameter $\epsilon$ and kernel parameter $\beta$. 
The KDE runtime depends on $m$, the number of query points and $\mu$, a lower bound on the density for each query point.}\label{tab:main}
\end{table}

\paragraph{Application} An immediate application of our kernel sum algorithm is a faster algorithm for estimating the {\em kernel alignment}~\citep{cristianini2002kernel}, a popular measure of similarity between kernel matrices. Given $K$ and $K'$, the {\em alignment} between $K$ and $K'$ is defined as
\begin{align*}
    {\hat{A}(K, K')= \frac{\langle K,K'\rangle}{\sqrt{\langle K,K\rangle \langle K',K'\rangle}},}
    \end{align*}
where $\langle K,K'\rangle = \sum_{i,j} K_{i,j} K'_{i,j}$ is the inner product between the matrices $K$ and $K'$ interpreted as vectors. Our algorithm yields an efficient algorithm for estimating $\hat{A}(K, K')$ as long as the product kernels $K \circ K$, $K' \circ K'$ and $K \circ K'$ are supported by fast kernel density evaluation algorithms as described earlier. This is the case for e.g., the Gaussian or Laplace kernels.

\paragraph{Related work}
The problem of evaluating kernel densities, especially for the Gaussian kernel, has been studied extensively. In addition to the recent randomized algorithms discussed in the introduction, there has been a considerable amount of work on algorithms in low dimensional spaces, including~\cite{greengard1991fast,yang2003improved,lee2006dual,lee2009fast,march2015askit}. We present a streamlined version of the Fast Gauss Transform algorithm of~\citet{greengard1991fast} in the appendix. In addition, there has been a considerable effort designing {\em core-sets} for this problem~\citep{phillips2018improved,phillips2018near}. 

The sum of kernel values can be viewed as a similarity analog of the sum of pairwise {\em distances} in metric spaces. The latter quantity can be approximated in time linear in the number $n$ of points in the metric space~\citep{indyk1999sublinear,chechik2015average,cohen2018clustering}. Note that it is not possible to achieve an $o(n)$-time algorithm for this problem, as a single point can greatly affect the overall value. To the best of our knowledge, our algorithms for the kernel matrix sum are the first that achieve sublinear in $n$ running time and give $O(1)$-approximation for a nontrivial kernel problem. 

Computing the top eigenvectors of a kernel matrix is a central problem -- it is the primary operation behind kernel principal component analysis \citep{scholkopf1997kernel}. Projection onto these eigenvectors also yields an optimal low-rank approximation to the kernel matrix, which can be used  Low-rank approximation is widely used to approximate kernel matrices, to speed up kernel learning methods \citep{williams2001using,fine2001efficient}. Significant work has focused on fast low-rank approximation algorithms for kernel matrices or related distance matrices \citep{musco2017recursive,musco2017sublinear,bakshi2018sublinear,indyk2019sample,bakshi2020testing}. These algorithms have runtime scaling roughly linearly in $n$. However, they do not give any nontrivial approximation to the top eigenvalues or eigenvectors of the kernel matrix themselves, unless we assume that the matrix is near low-rank. To the best of our knowledge, prior to our work, no subquadratic time approximation algorithms for the top eigenvalue were known, even for $\Theta(1)$ approximation.

\paragraph{Our Techniques: Kernel Sum}
We start by noting that, via a Chernoff bound, a simple random sampling of kernel matrix entries provides the desired estimation in linear time.

\begin{claim}\label{clm:unisampling}
For a positive definite kernel $k: \R^d \times \R^d \rightarrow [0,1]$ with $k(x,x) = 1$ $\forall\, x$,   uniformly sample $t = O\left (\frac{n \log(1/\delta)}{\epsilon^2}\right )$ off-diagonal entries of $K$, $K_{i_1,j_1},...,K_{i_t,j_t}$ and let $\tilde s(K) = n + \frac{n(n-1)}{t} \cdot \sum_{\ell=1}^t K_{i_\ell,j_\ell}$. Then with probability $\ge 1-\delta$, $\tilde s(K) \in (1\pm \epsilon) \cdot s(K)$. 
\end{claim}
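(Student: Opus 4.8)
The plan is to apply a Chernoff/Hoeffding bound to the average of the sampled off-diagonal entries, so the only real work is controlling the variance (equivalently, verifying the deviation we want is within the regime where $t$ samples suffice). Write $s(K) = n + s_{\mathrm{off}}(K)$ where $s_{\mathrm{off}}(K) = \sum_{i \neq j} K_{i,j}$, using that the diagonal contributes exactly $n$ since $k(x,x)=1$. Let $\mu = \frac{1}{n(n-1)} s_{\mathrm{off}}(K)$ be the mean of a uniformly random off-diagonal entry; then $\frac{1}{t}\sum_{\ell=1}^t K_{i_\ell,j_\ell}$ is an average of $t$ i.i.d.\ random variables, each lying in $[0,1]$ (here we use $k: \R^d\times\R^d \to [0,1]$), with expectation $\mu$. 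Hoeffding's inequality gives $\Pr\left[\, \left|\frac{1}{t}\sum_\ell K_{i_\ell,j_\ell} - \mu\right| \ge \epsilon' \mu \,\right] \le 2\exp(-2 t (\epsilon'\mu)^2)$.

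The key observation that makes the sample size $t = O(n\log(1/\delta)/\epsilon^2)$ sufficient is a lower bound on $\mu$: since $K$ is PSD with $1$'s on the diagonal, we have $s(K) = \mathbf{1}^T K \mathbf{1} \ge 0$, but more to the point $s(K) \ge n$ (this follows from positive semidefiniteness: $\sum_{i,j}K_{i,j} \ge \sum_i K_{i,i} = n$, because the off-diagonal "noise" cannot make the quadratic form drop below the diagonal contribution — concretely $\mathbf{1}^TK\mathbf{1} \ge \frac{1}{n}(\mathrm{tr} K)\cdot\|\mathbf 1\|^2 / \ldots$; more simply, $K \succeq 0$ and the all-ones vector gives $\mathbf 1^T K \mathbf 1 \geq 0$, while splitting off the diagonal and using that $K$ restricted to any pair is PSD hence $K_{i,j} \ge -1$ won't suffice — instead use that $s(K) = \|\sum_i v_i\|^2 \ge 0$ in the feature map and, separately, Gershgorin or the trace bound). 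The cleanest route: $s_{\mathrm{off}}(K) = s(K) - n$ could in principle be as small as $0$, in which case $s(K) = n$ exactly and the estimator returns $\tilde s(K) = n = s(K)$ regardless of the samples, so the claim holds trivially. When $s_{\mathrm{off}}(K) > 0$, we want to show the multiplicative error on the whole sum is at most $\epsilon$; since the diagonal part $n$ is estimated exactly, it suffices that the additive error on $s_{\mathrm{off}}(K)$ is at most $\epsilon \cdot s(K)$, i.e.\ that the additive error on $\mu$ is at most $\epsilon \cdot s(K)/(n(n-1)) \ge \epsilon \mu \cdot s(K)/s_{\mathrm{off}}(K) \ge \epsilon\mu$ — wait, that's the wrong direction, so instead just require additive error $\le \epsilon s(K)/(n(n-1))$, and since $s(K) \ge n \ge 1$ this is $\ge \epsilon/(n(n-1)) \cdot n = \epsilon/(n-1)$, a quantity of order $\epsilon/n$.

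So concretely: set the target additive error on the sampled average to be $\eta := \epsilon s(K)/(n(n-1))$. Hoeffding gives failure probability $\le 2\exp(-2t\eta^2)$. Using $s(K) \ge n$, we get $\eta \ge \epsilon/(n-1) \ge \epsilon/n$, hence $2t\eta^2 \ge 2t\epsilon^2/n^2$, which is $\ge \log(2/\delta)$ once $t \ge \frac{n^2\log(2/\delta)}{2\epsilon^2}$ — this is $O(n^2\log(1/\delta)/\epsilon^2)$, a factor of $n$ worse than claimed. To recover the claimed $O(n\log(1/\delta)/\epsilon^2)$ one needs the sharper bound $s(K) \ge \sqrt{n}\cdot\ldots$ — actually the right fact is that the \emph{variance} of a single entry is at most $\mu$ (since $K_{i,j}\in[0,1]$ implies $\mathbb E[K_{i,j}^2] \le \mathbb E[K_{i,j}] = \mu$), so Bernstein's inequality, not Hoeffding, is the tool: $\Pr[|\bar K - \mu| \ge \epsilon'\mu] \le 2\exp(-\frac{t\epsilon'^2\mu^2}{2\mu + \tfrac{2}{3}\epsilon'\mu}) = 2\exp(-\Omega(t\epsilon'^2\mu))$. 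Now the relevant relative error on $\mu$ is $\epsilon' = \epsilon s(K)/s_{\mathrm{off}}(K) \ge \epsilon$, so the exponent is $\Omega(t\epsilon^2\mu)$, and $\mu = s_{\mathrm{off}}(K)/(n(n-1)) \ge (s(K)-n)/n^2$; combined with needing $\mu \gtrsim 1/n$ in the nontrivial case (when $s_{\mathrm{off}}(K) \gtrsim n$, i.e.\ when the off-diagonal mass is comparable to $n$ — and if it is much smaller, the estimator is trivially accurate because $n$ dominates), we get exponent $\Omega(t\epsilon^2/n)$, giving $t = O(n\log(1/\delta)/\epsilon^2)$ as claimed.

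The main obstacle, then, is the variance bookkeeping: one must carefully split into the case where $s_{\mathrm{off}}(K)$ is small relative to $n$ (where the deterministic diagonal term $n$ already gives the approximation for free, so the sampled part only needs crude accuracy) versus the case where $s_{\mathrm{off}}(K) = \Omega(n)$ (where $\mu = \Omega(1/n)$ and Bernstein with the $\mathbb E[K_{i,j}^2]\le\mathbb E[K_{i,j}]$ variance bound kicks in with the right dependence). I would write the proof as: (1) reduce to estimating $s_{\mathrm{off}}(K)$ additively to error $\epsilon s(K)$; (2) if $s_{\mathrm{off}}(K) < \epsilon n$ the claim is immediate; (3) otherwise apply Bernstein to the i.i.d.\ sum with per-term variance $\le \mu$, using $\mu \ge \epsilon/(n-1)$ to conclude that $t = O(n\log(1/\delta)/\epsilon^2)$ samples suffice. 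No appeal to PSD-ness of $K$ beyond the trivial $K_{i,j}\ge 0$ and $K_{i,i}=1$ is actually needed for this particular claim.
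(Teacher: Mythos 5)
Your proposal is correct and follows essentially the route the paper intends: the paper justifies Claim \ref{clm:unisampling} only by invoking ``a Chernoff bound,'' and your argument --- entries in $[0,1]$, per-sample variance at most the mean, and $s(K)\ge n$ from the unit diagonal, combined via Bernstein/multiplicative Chernoff rather than additive Hoeffding (which, as you note, would cost an extra factor of $n$) --- is exactly that route, including the accurate observation that positive definiteness is not actually needed for this claim. One small tightening: the case $s_o(K) < \epsilon n$ is not literally ``immediate,'' since the sampled term could still overshoot; it is covered by the same one-sided Bernstein tail bound with additive deviation $\epsilon t/(n-1)$ on the sample sum, which again yields an exponent of $\Omega(\epsilon^2 t/n)$, so the stated $t$ suffices in both cases.
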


Our goal is to do better, giving $1\pm \epsilon$ approximation to $s(K)$ in sublinear time. We achieve this by (a) performing a more structured random sampling, i.e., sampling a principal submatrix as opposed to individual entries, and (b) providing an efficient algorithm for processing this submatrix.
Subsampling the matrix requires a more careful analysis of the variance of the estimator. To accomplish this, we use the fact that the kernel matrix is PSD, which implies that its  ``mass'' cannot be too concentrated. For the second task, we use fast kernel density estimation algorithms, combined with random row sampling to reduce the running time.

\paragraph{Our Techniques: Top Eigenvector}

Our algorithm for top eigenvector approximation is a variant on the classic power method, with fast approximate matrix vector multiplication implemented through kernel density evaluation. Kernel density evaluation on a set of $n$ points with corresponding kernel matrix $K \in \R^{n \times n}$, can be viewed as approximating the vector $K z \in \R^n$ where $z(i) = 1/n$ for all $i$. Building on this primitive, it is possible to implement approximate multiplication with general $z \in \R^n$ \citep{charikar2017hashing}. We can then hope to leverage work on the `noisy power method', which approximates the top eigenvectors of a matrix using just approximate matrix vector multiplications with that matrix \citep{hardt2014noisy}. However, existing analysis assumes {\em  random} noise on each matrix vector multiplication, which does not align with the top eigenvector. This cannot be guaranteed in our setting. Fortunately, we can leverage additional structure: if the kernel $k$ is non-negative, then by the Perron-Frobenius theorem, $K$'s top eigenvector is entrywise non-negative. This ensures that, if our noise in approximating $Kz$ at each step of the power method is entrywise non-negative, then this noise will have non-negative dot product with the top eigenvector. We are able to guarantee this property, and show convergence of the method to an approximate top eigenvector, even when the error might align significantly with the top eigenvector. 

\section{Preliminaries}\label{sec:prelim}


Throughout, we focus on {\em nice} kernels satisfying:
\begin{definition}[Nice Kernel Function]\label{def:nice} A kernel function $k: \R^d \times \R^d  \rightarrow [0,1]$ is \emph{nice} if it is positive definite and satisfies
$k(x,x)=1$ for all $x \in \R^d$.
\end{definition}
Many popular kernels, such as the Gaussian kernel, the exponential kernel and the rational quadratic kernel described in the introduction,  are indeed nice. We also assume that $k$ admits a fast KDE algorithm. Specifically:
\begin{definition}[Fast KDE]\label{def:fast}
A kernel function $k$ admits a $O(dm/(\mu^p \epsilon^2))$ time KDE algorithm, if given a set of $m$ points $x_1, \ldots, x_n \in \R^d$, we can process them in $O(dm/(\mu^p \epsilon^2))$ time for some $p \ge 0$ such that we can answer queries of the form ${1 \over n}\sum_i k(y,x_i)$ up to $(1+\epsilon)$ relative error in $O(d/(\mu^p \epsilon^2))$ time for any query point $y \in \R^d$ with probability $\geq 2/3$ assuming that ${1 \over n}\sum_i k(y,x_i) \geq \mu$.
\end{definition}
Note that for a kernel function satisfying Def. \ref{def:fast}, evaluating $k(y_j)$ for $m$ points $Y = \{y_1,\ldots, y_m\}$ against $m$ points $X = \{x_1,\ldots,x_m\}$ requires $O(dm/(\mu^p \epsilon^2))$ total time.

\section{Sublinear Time Algorithm for Kernel Sum}\label{sec:kernelsum}

Our proposed kernel sum approximation algorithm will sample a set $A$ of $s = \Theta(\sqrt{n})$ input points and look at the principal submatrix $K_A$ of $K$ corresponding to those points. We prove that the sum of off-diagonal entries in $K_A$ (appropriately  scaled) is a good estimator of the sum of off-diagonal entries of $K$. Since for a nice kernel, the sum of diagonal entries is always $n$, this is enough to give a good estimate of the full kernel sum $s(K)$. Furthermore, we show how to estimate the sum of off-diagonal entries of $K_A$ quickly via kernel density evaluation, in $ds^{2-\delta}/\eps^{O(1)} = dn^{1-\delta/2}/\eps^{O(1)}$ time for a constant $\delta>0$. 
Overall this yields:
\begin{theorem}
\label{t:main}
    Let $K \in \R^{n \times n}$ be a kernel matrix defined by a set of $n$ points and a nice kernel function $k$ (Def. \ref{def:nice}) that admits $O(dm/(\mu^p \epsilon^2))$ time approximate kernel density evaluation (Def. \ref{def:fast}). After sampling a total of $O(\sqrt{n}/\eps^2)$ points, we can in $O \left (dn^{2 + 5p \over 4+2p}\log^2(n)/\eps^{8+6p \over 2+p} \right )$ time approximate the sum of entries of $K$ within a factor of $1+\eps$ with high probability $1-1/n^{\Theta(1)}$.
\end{theorem}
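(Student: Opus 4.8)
The plan is to split the kernel sum into the diagonal contribution (always exactly $n$ for a nice kernel) and the off-diagonal contribution $s_{\mathrm{off}}(K) = \sum_{i\neq j} K_{i,j}$, and estimate the latter by sampling a principal submatrix. Concretely, sample a set $A$ of $s = \Theta(\sqrt{n}/\eps^2)$ indices uniformly at random (say with replacement, or as a random subset), form $K_A$, and use $\widehat{s}_{\mathrm{off}} = \frac{n(n-1)}{s(s-1)} \sum_{i\neq j,\, i,j\in A} K_{i,j}$ as an unbiased estimator of $s_{\mathrm{off}}(K)$. The first substantive step is a variance bound: I need to show $\Var[\widehat{s}_{\mathrm{off}}] \le \eps^2 \, s_{\mathrm{off}}(K)^2 / \mathrm{poly}$. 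Expanding the variance of this U-statistic-like quantity, the dominant term counts pairs of off-diagonal entries sharing one index, i.e.\ terms of the form $\sum_i (\sum_j K_{i,j})^2 = \sum_i (\mathrm{row\ sum}_i)^2$, which by Cauchy–Schwarz could be as large as $n \cdot s_{\mathrm{off}}(K)^2 / n^2$ in the worst case — but here is where PSD-ness enters: since $K$ is PSD with ones on the diagonal, $K = \sum_\ell \lambda_\ell v_\ell v_\ell^T$ and the row sums $K\mathbf{1}$ cannot be too concentrated relative to $\mathbf{1}^T K \mathbf{1}$. I would argue $\|K\mathbf{1}\|_2^2 = \mathbf{1}^T K^2 \mathbf{1} \le \lambda_1(K) \cdot \mathbf{1}^T K \mathbf{1}$, and separately $\lambda_1(K) \le \sqrt{n}\cdot\sqrt{\mathrm{something}}$ or more simply that $s(K) = \mathbf{1}^T K \mathbf{1} \ge \lambda_1(K) \cdot (\text{projection of }\mathbf{1}\text{ onto }v_1)^2$ — the clean bound I expect is $\|K\mathbf 1\|_2^2 \le \lambda_1(K)\, s(K)$ and $\lambda_1(K) \le s(K)$ (since $\lambda_1 \le \mathrm{tr}(K) = n \le s(K)$ is too weak — I actually want $\lambda_1(K)\le \sqrt{n\cdot s(K)}$ or to directly bound the row-sum energy by $\sqrt n \cdot s(K)$ using $\|K\mathbf 1\|_2 \le \|K\|_F \cdot \ldots$). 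In any case, PSD-ness forces $\sum_i (\mathrm{row\ sum}_i)^2 \le \sqrt{n}\cdot s_{\mathrm{off}}(K) \cdot (\text{lower-order})$ up to the regime $s(K) = \Omega(n)$, and combined with the $1/s^2 \sim 1/n$ scaling this gives relative variance $O(\eps^4)$, hence by Chebyshev $1+\eps$ accuracy with constant probability, boosted to high probability by taking the median of $O(\log n)$ independent repetitions.

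**Next** I would address the computational step: given the sampled submatrix $K_A$ on $s = \Theta(\sqrt n/\eps^2)$ points, we cannot afford to form all $s^2$ entries (that would be $\Theta(n)$). Instead, partition (or randomly sample) the rows of $K_A$: pick a random subset $B\subseteq A$ of size $s^{1-\delta}$ for a small constant $\delta>0$, and for each $i\in B$ estimate its row sum $\sum_{j\in A} K_{i,j}$ using the fast KDE algorithm (Def.~\ref{def:fast}) — one KDE query per row, at cost $O(d/(\mu^p\eps^2))$ each after $O(ds/(\mu^p\eps^2))$ preprocessing on the $s$ points of $A$. Scale up by $s/|B|$ to estimate $\sum_{i\in A}\sum_{j\in A}K_{i,j}$, subtract the diagonal $|A|$, and rescale by $n(n-1)/(s(s-1))$. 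The KDE density lower bound $\mu$ is handled by noting that each row sum over $A$ is at least $1$ (the diagonal entry), so the normalized density $\frac1s\sum_{j\in A}K_{i,j} \ge 1/s = \Theta(\eps^2/\sqrt n)$, giving $\mu = \Theta(\eps^2/\sqrt n)$ and KDE cost per point $O(d (\sqrt n/\eps^2)^p/\eps^2)$. Balancing $\delta$ (the row-subsampling rate) against the variance incurred by subsampling rows — which again needs a PSD-flavored bound on how concentrated the row sums of $K_A$ can be, or more crudely uses that each row sum is between $1$ and $s$ — yields the stated exponent $\frac{2+5p}{4+2p}$ and the $\eps$ dependence $\eps^{-(8+6p)/(2+p)}$; I would just set up the optimization $s = n^{1/2}$, $|B| = s^{2\theta}$ and solve for $\theta$ so that the KDE cost $d\cdot s^{2\theta}\cdot s^p/\eps^{O(1)}$ matches the target, rather than grinding the arithmetic here.

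**The main obstacle** I anticipate is the second variance bound — controlling the error introduced by subsampling rows of $K_A$, since a naive bound (row sums as large as $s$) would blow up the variance by a factor of $s$ and kill the sublinear runtime; I expect this again requires invoking that $K_A$ is PSD (it is a principal submatrix of a PSD matrix) to argue $\sum_{i\in A}(\sum_{j\in A}K_{i,j})^2 \le \lambda_1(K_A)\cdot \mathbf 1^T K_A\mathbf 1 \le \mathrm{tr}(K_A)\cdot(\text{stuff})$, or a direct argument that the row-sum vector of $K_A$ has bounded $\ell_2$-to-$\ell_1$ ratio. A secondary subtlety is bookkeeping the interaction of the three layers of randomness (sampling $A$, sampling $B$, and the internal randomness/failure probability $2/3$ of each KDE query), which I would handle by conditioning: first fix $A$ so that $\widehat s_{\mathrm{off}}(K_A)$ is a good estimate of $s_{\mathrm{off}}(K)$ (Chebyshev + median trick), then condition on $A$ and bound the row-subsampling-plus-KDE error, amplifying each KDE call's success probability to $1 - 1/\mathrm{poly}(n)$ by taking $O(\log n)$ repetitions per query (this contributes one of the two $\log n$ factors; the other comes from the outer median-of-means). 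Finally I would combine the two error bounds by a triangle inequality to conclude the overall $(1+\eps)$ guarantee with probability $1 - n^{-\Theta(1)}$, and tally the runtime as $O(d\cdot s^{1-\delta}\cdot s^p/\eps^{O(1)}\cdot\log^2 n) = O(dn^{(2+5p)/(4+2p)}\log^2(n)/\eps^{(8+6p)/(2+p)})$.
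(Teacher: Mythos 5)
Your overall architecture matches the paper's: sample $\Theta(\sqrt n/\eps^2)$ points, bound the variance of the induced submatrix estimator using positive semidefiniteness, and then estimate $s_o(K_A)$ in subquadratic-in-$|A|$ time via KDE plus row subsampling. For the first variance bound, your spectral route is viable and in fact cleaner than the paper's (which uses a test-vector argument to show each off-diagonal row sum is at most $2\sqrt{s_o(K)/\eps}$ when $s_o(K)\ge\eps n$): the chain you are groping for is $\sum_i(\sum_j K_{i,j})^2=\mathbf{1}^TK^2\mathbf{1}\le\lambda_1(K)\,s(K)\le\norm{K}_F\,s(K)\le s(K)^{3/2}$, using $K_{i,j}^2\le K_{i,j}$, and then the cross term in the variance is at most $p^{-1}s(K)^{3/2}=\sqrt{n}\,s(K)^{3/2}\le s(K)^2$ since $s(K)\ge n$. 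You should commit to this; the versions you float ($\lambda_1\le n$ via the trace, or $\lambda_1\le\sqrt{n\cdot s(K)}$) are either too weak or unproven as stated.

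The computational step, however, has two genuine gaps. First, you query the KDE on the \emph{full} row sum $\sum_{j\in A}K_{i,j}$ (using the diagonal entry to certify $\mu\ge 1/s$) and subtract $|A|$ at the end. This fails: in the typical regime $s(K)=\Theta(n)$ one has $\E[s_o(K_A)]=\Theta(1)$ while the diagonal contributes $|A|=\Theta(\sqrt n/\eps^2)$, so a $(1\pm\eps)$-relative estimate of the full submatrix sum carries additive error $\eps|A|\gg s_o(K_A)$, and the off-diagonal estimate is destroyed after subtraction; to recover you would need KDE accuracy $\eps/|A|$, killing the runtime. The paper avoids this by building $O(\log m)$ nested static KDE structures on dyadic blocks of the points so that each query returns $\sum_{j\ne i}k(x_i,x_j)$ with the query point excluded, with $\mu=\Omega(\eps/m^2)$ justified by the off-diagonal row sums that actually matter. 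Second, uniform row subsampling with per-row KDE cannot be balanced to the claimed exponent: the subsampling variance is governed by $\max_i s_{o,i}(K_A)\cdot s_o(K_A)$, and $\max_i s_{o,i}(K_A)$ can be as large as $2\sqrt{s_o(K_A)}$, so when $s_o(K_A)=\Theta(\eps)$ you need essentially all rows and save nothing. The paper's mechanism is a heavy/light split by a threshold $t$: KDE with $\mu=t/m^2$ identifies and handles the heavy rows, while the light rows (all with $s_{o,i}\le t/m$) are subsampled and each evaluated \emph{exactly} by an $O(dm)$ linear scan, so the subsampling variance is controlled by $t$; optimizing $t=m^{2p/(2+p)}\eps^{2/(2+p)}$ is exactly what produces the exponent $\frac{2+5p}{4+2p}$ and the $\eps^{-(8+6p)/(2+p)}$ dependence. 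Without this split (or an equivalent device), the runtime you "would just set up the optimization" for does not materialize.
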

For any square matrix $K$, let  $s_o(K)$ be  the sum of off diagonal entries.
Crucial to our analysis will be the lemma:
\begin{lemma}[PSD Mass is Spread Out]\label{lem:spread}
Let $K \in \R^{n \times n}$ be a PSD matrix with diagonal entries all equal to $1$.
Let $s_{o,i}(K)$ be the sum of off diagonal entries in the $i^{th}$ row of $K$. If $s_o(K) \ge \epsilon n$ for some $\eps\le 1$, then $\forall\, i$:
\begin{align*}
s_{o,i}(K) \le 2\sqrt{s_0(K)/\epsilon}.
\end{align*}
\end{lemma}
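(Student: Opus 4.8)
The plan is to exploit the $2\times 2$ principal submatrix PSD constraint together with a global averaging argument. Fix a row index $i$. For any other index $j$, the $2 \times 2$ principal submatrix of $K$ on rows/columns $\{i,j\}$ is PSD with $1$'s on the diagonal and off-diagonal entry $K_{i,j}$, which only forces $|K_{i,j}| \le 1$; that alone is too weak. Instead I would use a larger principal submatrix. The key observation is that the sum of off-diagonal entries in row $i$, namely $s_{o,i}(K)$, is (up to the diagonal) the inner product of $e_i$ with $K$ applied to the indicator vector of the ``heavy'' columns, and PSD-ness controls such quadratic-form behavior.

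Concretely, here is the cleaner route I would take. Let $r = s_{o,i}(K)$ and suppose for contradiction $r > 2\sqrt{s_o(K)/\epsilon}$. Consider the vector $v = e_i + c \sum_{j \ne i} e_j = e_i + c(\mathbf{1} - e_i)$ for a scalar $c$ to be chosen (negative, to create cancellation). Then $v^T K v = v^T K v \ge 0$ by PSD-ness. Expanding, $v^T K v = 1 + 2c\, s_{o,i}(K) + c^2 \sum_{j,\ell \ne i} K_{j,\ell} = 1 + 2cr + c^2\big(\sum_{j,\ell \ne i} K_{j,\ell}\big)$. Now $\sum_{j,\ell\ne i} K_{j,\ell} = (n-1) + s_o(K')$ where $K'$ is the $(n-1)\times(n-1)$ principal submatrix obtained by deleting row/column $i$; crucially $s_o(K') \le s_o(K)$ since all off-diagonal entries of a nice kernel are non-negative (this uses non-negativity, which holds for the kernels in question — but note the lemma statement only assumes PSD; I'd double check whether non-negativity is needed or whether one can bound $\sum_{j,\ell\ne i}K_{j,\ell}$ differently, e.g.\ it is at most $n-1 + s_o(K)$ regardless of sign only if cross terms don't blow up, so non-negativity of entries is the safe assumption and is consistent with the paper's setting). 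Choosing $c = -r/(n-1 + s_o(K))$ to minimize the quadratic in $c$, non-negativity of $v^T K v$ gives $1 - r^2/(n-1 + s_o(K)) \ge 0$, i.e.\ $r^2 \le n - 1 + s_o(K)$. Since $s_o(K) \ge \epsilon n$, we have $n - 1 \le n \le s_o(K)/\epsilon$, so $r^2 \le s_o(K)/\epsilon + s_o(K) \le 2 s_o(K)/\epsilon$ (using $\epsilon \le 1$), hence $r \le \sqrt{2 s_o(K)/\epsilon} \le 2\sqrt{s_o(K)/\epsilon}$, a contradiction. This closes the argument.

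The main obstacle I anticipate is handling the term $\sum_{j,\ell \ne i} K_{j,\ell}$ cleanly: I need an upper bound on it in terms of $s_o(K)$ and $n$. If kernel entries are non-negative (as the paper assumes throughout: values in $[0,1]$), then $\sum_{j,\ell\ne i} K_{j,\ell} = (n-1) + s_o(K_{-i}) \le (n-1) + s_o(K) \le n + s_o(K)$, which is exactly what the computation above needs. If one wants to rely only on PSD-ness without sign assumptions, a little more care is required — one could instead bound $\sum_{j,\ell \ne i} K_{j,\ell} \le \tr(K_{-i}) \cdot (\text{something})$ via eigenvalues, but the non-negativity route is simplest and matches the paper's standing hypotheses on nice kernels. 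A secondary minor point is verifying the constants: the computation above actually yields the slightly stronger bound $\sqrt{2 s_o(K)/\epsilon}$, so the stated $2\sqrt{s_o(K)/\epsilon}$ is comfortably satisfied, leaving room in case one prefers the weaker but sign-agnostic estimate $\sum_{j,\ell\ne i}K_{j,\ell} \le n + s_o(K) + (\text{slack})$.
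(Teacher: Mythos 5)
Your proof is correct and takes essentially the same route as the paper's: both apply the PSD quadratic form to a test vector with one distinguished value at coordinate $i$ and a constant value on all other coordinates (you optimize that scalar, yielding the slightly sharper bound $\sqrt{2\,s_o(K)/\epsilon}$, whereas the paper fixes the value $-\sqrt{s_o(K)/\epsilon}$ at coordinate $i$ and $1$ elsewhere and derives a contradiction). Concerning the non-negativity issue you flag: it is not actually needed, since by symmetry the off-diagonal sum of the principal submatrix obtained by deleting row and column $i$ equals $s_o(K) - 2\,s_{o,i}(K) < s_o(K)$ whenever $s_{o,i}(K) > 0$, which holds under the contradiction hypothesis; the paper's proof implicitly uses the same bound.
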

Lemma \ref{lem:spread} implies that if the off-diagonal elements contribute significantly  to $s(K)$ (i.e., $s_0(K) \ge \epsilon n$), then the off diagonal weight is spread relatively evenly across at least $\Omega(\sqrt{\epsilon \cdot s_o(K)}) = \Omega(\sqrt{n})$ rows/columns. This allows our strategy of sampling a principal submatrix with just $\Theta(\sqrt{n})$ rows to work.
\begin{proof}
Assume for the sake of contradiction that there is a row with $s_{o,i}(K) > 2 \sqrt{s_0(K)/\epsilon}$. Let $x$ be the vector that has value $-\sqrt{\frac{s_0(K)}{\epsilon}}$ at index $i$ and $1$ elsewhere. Then:
\begin{align*}
x^T K x \le  \frac{s_o(K)}{\epsilon} - \frac{4 \cdot s_o(K)}{\epsilon} + s_o(K)+ n \le -\frac{s_o(K)}{\epsilon},
\end{align*}
where the last inequality follows from the assumptions that $s_0(K) \ge \epsilon n$ and $\eps\leq 1$. The above contradicts $K$ being PSD, completing the lemma.
\end{proof}

\subsection{Our Estimator}


For a subset $A \subseteq [n]$, let $K_A$ be the corresponding kernel matrix (which is a principal submatrix of $K$). Suppose that $A$ is chosen by adding every element $i \in [n]$ to $A$ independently at random with probability $p$ (we will later set $p=1/\sqrt n$). Then $Z \triangleq n+s_o(K_A)/p^2$ is an unbiased estimator of $s(K)$. That is, $\E[Z]=s(K)$. We would like to show that the variance $\Var[Z]$ is small. In fact, in Lemma \ref{lem:var} below, we show that $\Var[Z] = O(s(K)^2)$. Thus,  taking $\Var[Z] / (\eps^2\E[Z]^2) = O(1/\eps^2)$ samples of $Z$ and returning the average yields a $1+\eps$ approximation of $s(K)$ with a constant probability.  To amplify the probability of success to $1-\delta$ for any $\delta>0$,  we  take the median of $O(\log(1/\delta))$ estimates and apply Chernoff bound in a standard way. Our variance bound follows:


\begin{lemma}\label{lem:var}
    $\Var[Z] =  O(s(K)^2).$
\end{lemma}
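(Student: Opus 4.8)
The plan is to directly expand the variance of $Z = n + s_o(K_A)/p^2$ as a sum over pairs and quadruples of indices, and to bound each contribution using Lemma \ref{lem:spread}. Since the additive constant $n$ does not affect the variance, we have $\Var[Z] = \frac{1}{p^4}\Var[s_o(K_A)]$. Writing $\chi_i$ for the indicator that $i \in A$, we get $s_o(K_A) = \sum_{i \ne j} K_{ij}\chi_i\chi_j$, so $\E[s_o(K_A)] = p^2 s_o(K)$ and
\[
\Var[s_o(K_A)] = \sum_{i\ne j,\ k\ne\ell} K_{ij}K_{k\ell}\bigl(\E[\chi_i\chi_j\chi_k\chi_\ell] - p^4\bigr).
\]
The point is that $\E[\chi_i\chi_j\chi_k\chi_\ell] - p^4$ is zero unless the index sets $\{i,j\}$ and $\{k,\ell\}$ overlap, and when they overlap in exactly one index the term is $O(p^3)$, while when $\{i,j\} = \{k,\ell\}$ the term is $O(p^2)$. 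So after dividing by $p^4$ the variance splits into a ``one shared index'' contribution of order $\frac{1}{p}\sum_{i}\bigl(\sum_{j} K_{ij}\bigr)\bigl(\sum_{\ell} K_{i\ell}\bigr) = \frac{1}{p}\sum_i s_{o,i}(K)^2$, plus a ``two shared indices'' contribution of order $\frac{1}{p^2}\sum_{i\ne j} K_{ij}^2 \le \frac{1}{p^2} s_o(K)$ (using $K_{ij}\le 1$).

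The heart of the argument is bounding $\sum_i s_{o,i}(K)^2$. Here I would invoke Lemma \ref{lem:spread}: assuming $s_o(K)\ge \eps n$ (the case that matters; the small-$s_o$ case is handled separately as in the commented-out remark, or trivially since then $s(K)\ge n$ already dominates), every row satisfies $s_{o,i}(K) \le 2\sqrt{s_o(K)/\eps}$. Hence
\[
\sum_i s_{o,i}(K)^2 \le \max_i s_{o,i}(K) \cdot \sum_i s_{o,i}(K) \le 2\sqrt{s_o(K)/\eps}\cdot s_o(K) = \frac{2}{\sqrt\eps} s_o(K)^{3/2}.
\]
Plugging in $p = 1/\sqrt n$: the one-shared-index term is $O\!\left(\sqrt n \cdot \eps^{-1/2} s_o(K)^{3/2}\right)$ and the two-shared-index term is $O(n\, s_o(K))$. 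Now I use $s_o(K) \le s(K)$ and $n \le s(K)$ (since the diagonal contributes exactly $n$), together with $s_o(K)\ge \eps n$ which gives $\sqrt n \le s_o(K)^{1/2}/\sqrt\eps$; this converts the first term into $O(\eps^{-1} s_o(K)^2) = O(\eps^{-1} s(K)^2)$ and the second into $O(s(K)^2)$. Treating $\eps$ as a constant (or folding the $\eps^{-1}$ into the stated $O(\cdot)$, consistent with how $\eps$ is handled elsewhere in the section) yields $\Var[Z] = O(s(K)^2)$.

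The main obstacle — and the only place real structure is needed — is controlling $\sum_i s_{o,i}(K)^2$, i.e., ruling out a single heavy row that would blow up the variance of the principal-submatrix estimator; this is exactly what positive semidefiniteness buys us through Lemma \ref{lem:spread}, and without it the claim would be false (a block-diagonal $0/1$ matrix with one dense $\sqrt n \times \sqrt n$ block has $s(K)=\Theta(n)$ but a $\sqrt n$-subsample misses the block with constant probability). The remaining steps are routine: carefully enumerating the cases for $\E[\chi_i\chi_j\chi_k\chi_\ell]-p^4$ (distinguishing $|\{i,j\}\cap\{k,\ell\}| \in \{0,1,2\}$, and within the last case whether $(i,j)=(k,\ell)$ or $(i,j)=(\ell,k)$), and bookkeeping the powers of $p$ and $\eps$.
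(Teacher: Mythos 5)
Your decomposition of the variance is essentially the paper's: the two-shared-index term contributes $p^{-2}\sum_{i\ne j}K_{i,j}^2 \le n\cdot s(K)\le s(K)^2$, the disjoint term contributes at most $s_o(K)^2\le s(K)^2$, and everything hinges on the one-shared-index term $p^{-1}\sum_i s_{o,i}(K)^2$, which both you and the paper control via $\sum_i s_{o,i}(K)^2\le \max_i s_{o,i}(K)\cdot s_o(K)$ together with Lemma~\ref{lem:spread}. So the approach and the key structural ingredient are the right ones.

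The gap is in how you instantiate the free parameter of Lemma~\ref{lem:spread}. You set it to the algorithm's accuracy parameter $\epsilon$ (under the side assumption $s_o(K)\ge \epsilon n$), which gives $\max_i s_{o,i}(K)\le 2\sqrt{s_o(K)/\epsilon}$ and hence only $\Var[Z]=O(s(K)^2/\epsilon)$ --- weaker than the stated $O(s(K)^2)$, and not just cosmetically: the lemma feeds into the sample count $\Var[Z]/(\epsilon^2\E[Z]^2)$, so your bound would inflate it from $O(1/\epsilon^2)$ to $O(1/\epsilon^3)$ and degrade the final runtime. The fix is to choose the parameter adaptively, as the paper does: if $s_o(K)\le n$, apply Lemma~\ref{lem:spread} with parameter $s_o(K)/n$ to get $\max_i s_{o,i}(K)\le 2\sqrt{n}$, whence $p^{-1}\sum_i s_{o,i}(K)^2\le \sqrt{n}\cdot 2\sqrt{n}\cdot s_o(K)=2n\,s_o(K)\le 2s(K)^2$; if $s_o(K)>n$, apply it with parameter $1$ to get $\max_i s_{o,i}(K)\le 2\sqrt{s_o(K)}$, whence $p^{-1}\sum_i s_{o,i}(K)^2\le 2\sqrt{n}\,s_o(K)^{3/2}\le 2 s_o(K)^2\le 2s(K)^2$. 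This also disposes of the complementary case $s_o(K)<\epsilon n$, which you wave away: that cannot be dismissed by saying the diagonal dominates, since the lemma is a statement about $\Var[Z]$ and a large variance of $s_o(K_A)/p^2$ would still ruin the estimate of $s(K)\approx n$; but the first case above ($s_o(K)\le n$) covers it with no extra assumption.
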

\begin{proof}
Let $Z_o\triangleq Z-n=s_o(K_A)/p^2$. 
\begin{align*}
	\Var[Z]=\Var[Z_o]&=\E[Z_o^2]-\E[Z_o]^2\\
	&\leq \E[Z_o^2]\\
	&={1 \over p^4} \E[s_o(K_A)^2]\\
	&={1 \over p^4}\left( p^2 \sum_{\substack{i,j \in [n]\\i \neq j}} K_{i,j}^2+2p^3 \sum_{\substack{i,j,j'\in [n]\\|\{i,j,j'\}|=3}}K_{i,j}K_{i,j'} + p^4 \sum_{\substack{i,j,i',j' \in [n]\\|\{i,j,i',j'\}|=4}}K_{i,j}K_{i',j'}  \right).
\end{align*}

We upper bound each term of the above expression separately. We start with the first term:
\begin{align*}
	p^{-2} \sum_{\substack{i,j \in [n]\\i \neq j}} K_{i,j}^2 = n \sum_{\substack{i,j \in [n]\\i \neq j}} K_{i,j}^2 \leq n \cdot s(K) \leq s(K)^2,
\end{align*}
where in the equality we set $p=1/\sqrt{n}$ and in the first inequality we use the fact that $0\leq K_{i,j}\leq 1$ for every $i,j$. We have $s(K) \ge n$ since all diagonal entries are $1$, giving the last inequality.

We upper bound the third term:
$$
	\sum_{\substack{i,j,i',j' \in [n]\\|\{i,j,i',j'\}|=4}}K_{i,j}K_{i',j'} \leq \sum_{i,j,i',j' \in [n]}K_{i,j}K_{i',j'}=s(K)^2.
$$

To upper bound the second term we consider two cases.

\paragraph{Case $s_o(K)\leq n$.} In this case we can use Lemma~\ref{lem:spread} with $\eps=s_o(K)/n$ to conclude that 

$$s_{o,i}(K) \le 2 \sqrt{s_o(K)/\eps}=2\sqrt{s_o(K)\cdot  n/s_o(K)} = 2\sqrt{n}$$

With this bound we can bound the second term by

$$p^{-1} \sum_{\substack{i,j,j'\in [n]\\|\{i,j,j'\}|=3}}K_{i,j}K_{i,j'} \leq p^{-1} \cdot  s_o(K)  \max_i s_{o,i}(K) \leq 2 n s_o(K)\leq 2s(K)^2,$$

where we substituted $p=1/\sqrt{n}$.

\paragraph{Case $s_o(K)>n$.} In this case we proceed as follows.
\begin{align*}
	p^{-1} \sum_{\substack{i,j,j'\in [n]\\|\{i,j,j'\}|=3}}K_{i,j}K_{i,j'} & \leq p^{-1} \sum_i \left(\sum_{j \, : \, j\neq i}K_{i,j}\right)^2\\
	&=p^{-1} \sum_i s_{o,i}(K)^2.
\end{align*}

Since $\sum_i s_{o,i}(K)$ is fixed (and is equal to $s_o(K)$), the last expression is maximized when some $s_{o,i}$ take as large values as possible and the rest are set to $0$. Since  $s_{o,i}(K)\leq t\triangleq 2\sqrt{s_o(K)}$ (by Lemma~\ref{lem:spread} with $\eps=1$), we have that in the worst case for $s_o(K)/t$ values of $i$ we have $s_{o,i}=t$ and $s_{o,i}=0$ for the rest of $i$. Therefore,
\begin{align*}
	p^{-1} \sum_i s_{o,i}(K)^2&\leq p^{-1} t^2 {s_o(K) \over t}\\
	&={2 \over p\sqrt{s_o(K)}}s_o(K)^2\\
	&\leq 2 s_o(K)^2\\
	&\leq 2 s(K)^2,
\end{align*}
where we use $s_o(K)> n$ and $p={1\over \sqrt{n}}$.
\end{proof}

\subsection{Approximating the Value Of the Estimator}

To turn the argument from the previous section into an algorithm, we need to approximate the value of $Z=n+s_o(K_A)/p^2$ for $p=1/\sqrt{n}$ efficiently. It is sufficient to efficiently approximate $Z = n +n\cdot s_o(K_A)$ when $s_o(K_A) = \Omega(\eps)$, as otherwise the induced loss in approximating $s(K)$ is  negligible since we always have $s(K) \ge n$.

Let $K'$ be a kernel matrix of size $m \times m$ for which $s_o(K')\geq \Omega(\eps)$. We show that for such a kernel matrix it is possible to approximate $s_o(K')$ in time $m^{2-\delta}$ for a constant $\delta>0$. This is enough to yield a sublinear time algorithm for estimating $Z$ since $K_A$ is $m \times m$ with $m\approx pn=\sqrt{n}$.

\paragraph{A simple algorithm.} We note that it is sufficient to approximate the contribution to $s_o(K')$ from the rows $i$ for which the sum of entries $s_{o,i}$ is $\Omega(\eps/m)$, as the contribution from the rest of the rows in negligible under the assumption that $s_o(K') = \Omega(\epsilon)$. So fix an $i$ and assume that $s_{o,i} \geq \Omega(\eps/m)$. To estimate $s_{o,i}$, we use a kernel density evaluation algorithm.
Our goal is to approximate
$$
    s_{o,i} = \sum_{j:j\neq i}K'_{i,j} = \sum_{j:j \neq i}k(x_i,x_j).
$$
The approach is to first process the points $x_1, \ldots, x_m$ using the  algorithm for the KDE. The KDE query algorithm then allows us to answer queries of the form ${1 \over m}\sum_j k(q,x_j)$ for an arbitrary query point $q$ in time $O(d/(\mu^p \eps^2))$, where $\mu$ is a lower bound on ${1 \over m}\sum_j k(q,x_j)$. So we  set $\mu=\Omega(\eps/m^2)$ and query the KDE data structure on all $q = x_1, \ldots, x_m$. 

The above does not quite work however -- to estimate the off-diagonal sum we need to answer queries of the form ${1 \over m}\sum_{j:j \neq i}k(x_i,x_j)$ instead of ${1 \over m}\sum_{j}k(x_i,x_j)$. This could be solved if the KDE data structure were dynamic, so that we could remove any point $x_i$ from it. Some of the data structures indeed have this property. To provide a general reduction, however, we avoid this requirement by building several ``static'' data structures and then answering a single query of the form ${1 \over m}\sum_{j:j \neq i}k(x_i,x_j)$ by querying $O(\log m)$ static data structures. Assume w.l.o.g.\ that $m$ is an integer power of $2$. Then we build a data structure for points $x_1, \ldots, x_{m/2}$ and another for $x_{m/2 + 1}, \ldots, x_m$. We also build $4$ data structures for sets $x_1, \ldots, x_{m/4}$ and $x_{m/4+1}, \ldots, x_{m/2}$, and $x_{m/2+1}, \ldots, x_{3m/4}$, and $x_{3m/4+1}, \ldots, x_m$. And so forth for $\log m$ levels. Suppose that we want to estimate ${1 \over m}\sum_{j:j \neq 1}k(x_1,x_j)$. For that we query the data structures on sets $x_{m/2+1}, \ldots, x_m$ and $x_{m/4+1}, \ldots, x_{m/2}$, and $x_{m/8+1}, \ldots, x_{m/4}$ and so forth for a total of $\log m$ data structures -- one from each level. Similarly we can answer queries for an arbitrary $i$. The threshold $\mu$ for all the data structures is the same as before: $\mu=\Omega(\eps/m^2)$.

Since we need to query $O(\log m)$ data structures for every $x_i$ and we also need to amplify the probability of success to, say, $1 - 1/m^2$. Thus,  the final runtime of the algorithm is
$$
    O(m \cdot d/(\mu^p\eps^2)\log^2 m) = O(dm^{1+2p}/\eps^{2+p}\log^2 m).
$$



\paragraph{A faster algorithm.} We note that in the previous case, if $s_{o,i}=\Theta(\eps/m)$ for every $i$, then we can approximate $s_o(K')$ efficiently by sampling a few $i$, evaluating the corresponding $s_{o,i}$ exactly (in $O(dm)$ time) and returning the empirical mean of the evaluated $s_{o,i}$. This works since the variance is small. There can be, however, values of $i$ for which $s_{o,i}$ is large. For these values of $i$, we can run a kernel density evaluation algorithm. More formally, we define a threshold $t>0$ and run a kernel density evaluation algorithm on every $i$ with $\mu=t/m^2$ (similarly as in the previous algorithm). This reports all $i$ for which $s_{o,i}\geq t/m^2$. This takes time
$$
    O(dm\log^2m \cdot 1/(\mu^p \eps^2))=O(dm^{1 + 2p}\log^2m/(\eps^2 t^p)).
$$
Let $I$ be the set of remaining $i$.
To estimate the contribution from $s_{o,i}$ with $i \in I$, we repeatedly sample $i \in I$ and evaluate $s_{o,i}$ exactly using the linear scan, and output the average of the evaluated $s_{o,i}$ scaled by $|I|$ as an estimate of the contribution of $s_{o,i}$ from $i \in I$.
Since we can ignore the contribution from $i \in I$ with $s_{o,i}\leq o(\eps/m)$, we can assume that $t/m\geq s_{o,i}\geq \Omega(\eps/m)$ for every $i \in I$ and then
$$
    \Var_i[s_{o,i}]/(\eps^2 (\E_i[s_{o,i}])^2) \leq O(t^2/\eps^4)
$$
samples are sufficient to get a $1+\eps$ approximation. This step takes time $O(t^2 dm/\eps^4)$. The final runtime is
\begin{align*}
    O(dm^{1 + 2p}\log^2m/(\eps^2 t^p)+dt^2 m/\eps^4) &=  O(dm^{2 + 5p \over 2+p}\log^2(m)/\eps^{4+4p \over 2+p})
\end{align*}
by setting $t=m^{2p \over 2+p}\eps^{2 \over 2+p}$. Since $m = \Theta(\sqrt{n})$ with high probability, we achieve $O(dm^{2 + 5p \over 4+2p}\log^2(m)/\eps^{4+4p \over 2+p})$ runtime for approximating the random variable $Z = n + n \cdot s_o(K_A)$. Since we evaluate the random variable $Z$ $O(1/\eps^2)$ times, the final runtime to approximate the sum of entries of the kernel matrix $K$ within a factor of $1+\eps$ is $O(dn^{2 + 5p \over 4+2p}\log^2(n)/\eps^{8+6p \over 2+p})$.

\subsection{Sample Complexity Lower Bound}

We next prove a lower bound, which shows that sampling just $O(\sqrt{n})$ data points, as is done in our algorithm, is optimal up to constant factors.
\begin{theorem}\label{thm:sample}
    Consider any nice kernel $k$ such that $k(x,y) \to 0$ as $\|x-y\| \to \infty$. In order to estimate $\sum_{i,j}^n k(x_i,x_j)$ within any constant factor, we need to sample at least $\Omega(\sqrt n)$ points from the input $x_1, \ldots, x_n$.
\end{theorem}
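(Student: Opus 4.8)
The plan is to exhibit two families of point-sets on which the true kernel sum differs by a large constant factor, yet which are statistically indistinguishable to any algorithm that samples $o(\sqrt n)$ of the input points. The natural construction uses a ``mostly scattered'' instance versus a ``scattered plus one tight cluster'' instance. Concretely, in the first instance all $n$ points are pairwise far apart (distance tending to infinity), so that every off-diagonal kernel value is negligible and $s(K) = n + o(1)$. In the second instance, we take $\sqrt n$ of the points and place them at (essentially) the same location — a single tight cluster — while the remaining $n - \sqrt n$ points stay far apart from each other and from the cluster. The cluster contributes a principal all-ones block of size $\sqrt n \times \sqrt n$, adding $(\sqrt n)^2 - \sqrt n = n - \sqrt n$ to the off-diagonal sum, so $s(K) = 2n - \sqrt n - o(1)$. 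Thus the two instances differ in $s(K)$ by a factor arbitrarily close to $2$ (and by choosing a cluster of size $c\sqrt n$ we can push the gap to any constant), so no estimator achieving a fixed constant-factor approximation can be correct on both.

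The indistinguishability step is where the $\sqrt n$ threshold enters. An algorithm that samples a set $S$ of $t$ input indices and then runs on those points sees, in the first instance, $t$ far-apart points; in the second instance it sees $t$ points which are far-apart unless $S$ happens to contain at least two indices from the planted $\sqrt n$-element cluster. If $|S \cap \text{cluster}| \le 1$, the multiset of points the algorithm observes is identical in distribution to what it sees in the first instance (all mutual distances large), so its output distribution is the same. Hence the two instances are indistinguishable unless the sample hits the cluster at least twice. To make this clean, I would randomize the planted instance by choosing the cluster to be a uniformly random $\sqrt n$-subset of the $n$ indices (and randomize the first instance trivially), so that for any fixed (possibly adaptive) sampling strategy of size $t$, the probability of seeing two cluster points is $O(t^2/n)$ by a union bound over pairs. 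For $t = o(\sqrt n)$ this probability is $o(1)$, so the algorithm cannot distinguish the instances with constant advantage, contradicting any claimed constant-factor guarantee. A standard averaging / Yao-type argument converts this into a lower bound against all (randomized) algorithms.

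The remaining technical points are minor. We need the kernel to actually realize ``all pairwise values negligible'' and ``a block of values close to $1$'': the first follows from the hypothesis $k(x,y)\to 0$ as $\|x-y\|\to\infty$ (so place the scattered points on a sufficiently coarse grid, with spacing large enough that the total off-diagonal contribution is, say, $\le \eps n$), and the second from $k(x,x)=1$ together with continuity of $k$ near the diagonal for nice kernels (so a cluster of radius small enough makes every intra-cluster value $\ge 1-\eps$). One should also be slightly careful that the algorithm may sample the *same* index twice or may be told it sampled a diagonal entry; but diagonal entries carry no information since $k(x,x)=1$ always, so this does not help.

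The main obstacle is handling \emph{adaptive} sampling correctly: an algorithm may choose which point to sample next based on the coordinates of points seen so far. The resolution is the observation above — as long as fewer than two cluster points have been revealed, the conditional distribution of everything seen so far is exactly the scattered-instance distribution, independent of the (still hidden) identity of the cluster, so adaptivity gives no advantage in *finding* the cluster; the probability of the sample hitting two cluster indices is therefore bounded by the non-adaptive bound $\binom{t}{2}/\binom{n}{2} \le t^2/n$. Formalizing this ``adaptivity buys nothing until you get lucky'' claim via a coupling between the two instances (coupling the revealed points one at a time and noting the couplings agree until the first double-hit) is the one place the argument needs care; everything else is a routine calculation, which is presumably why the authors call the lower bound ``very simple.''
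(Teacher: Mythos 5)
Your construction is essentially identical to the paper's: the paper also compares a fully scattered instance ($s(K)\approx n$) against one where a random set of $\Theta(\sqrt{Cn})$ indices are collapsed to a single point ($s(K)\approx 2Cn$), and concludes that distinguishing them requires $\Omega(\sqrt{n})$ samples via the same birthday-paradox argument. Your additional care about adaptivity and the coupling argument only fills in details the paper leaves implicit.
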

\begin{proof}
    Suppose that we want to approximate $s(K) = \sum_{i, j}^n k(x_i,x_j)$ within a factor of $C>0$.
    
    Consider two distributions of $x_1, \ldots, x_n$. For the first distribution $x_1, \ldots, x_n$ are sampled independently from a large enough domain such that $k(x_i,x_j)\approx 0$ for all $i \neq j$ with high probability. In this case $s(K) \approx n$. For the other distribution we sample again $x_1, \ldots, x_n$ independently at random as before and then 
    sample $s_1, \ldots, s_{\sqrt{2Cn}}$ from $1, \ldots, n$ without repetition. Then we set $x_{s_t}=x_{s_1}$ for $t = 2, \ldots, \sqrt{2Cn}$. For this distribution $s(K) \approx 2Cn$. To distinguish between these two distributions we need to sample at least $\Omega(\sqrt{n/C})=\Omega(\sqrt n)$ points from $x_1, \ldots, x_n$.
\end{proof}

\subsection{Application to Kernel Alignment}

Our algorithm can be immediately used to estimate  the value of the kernel alignment $\hat{A}(K, K')$ as defined in the introduction. The only requirement is that  the submatrices of product kernels $K \circ K$, $K' \circ K'$ and $K \circ K'$ are supported by fast kernel density evaluation algorithms. We formalize this as follows. Let $C$ be a set of nice kernels defined over pairs of points in $\R^d$. For any two kernel functions  $k, k' \in C$, the {\em product kernel} $k \circ k': \R^{2d} \times \R^{2d} \to [0,1]$ is such that for any $p,q,p',q' \in \R^d$ we have
\begin{align*}
(k \circ k')( (p,p'), (q,q'))=k(p,q) \cdot k'(p',q').
\end{align*}

\begin{definition}
Let $C=C^1, C^2, \ldots$ be a sequence of sets of nice kernels, such that $C^d$ is defined over pairs of points in $\R^d$. We say that $C$ is {\em closed under product} if for any two $k, k' \in C^d$, the product kernel $k \circ k'$ belongs to $C^{2d}$.
\end{definition}

It is immediate that the Gaussian kernel, interpreted as a sequence of kernels for different values of the dimension $d$, is closed under product. Thus we obtain the following:
\begin{corollary}
Given two Gaussian kernel matrices $K, K' \in \R^{n \times n}$, each defined by  a set of $n$ points in $\R^d$, and $\epsilon \in (0,1)$, $\hat{A}(K,K')$ can be estimated to $1 \pm \epsilon$ relative error in time $O(dn^{0.66}/\epsilon^{O(1)} \log^2 n)$ with high probability $1-1/n^{\Theta(1)}$. 
\end{corollary}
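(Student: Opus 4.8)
The plan is to reduce the estimation of $\hat{A}(K,K')$ to three applications of Theorem~\ref{t:main}. Write $a=\langle K,K\rangle$, $b=\langle K',K'\rangle$, and $c=\langle K,K'\rangle$, so that $\hat{A}(K,K')=c/\sqrt{ab}$. Since all diagonal entries of $K$ and $K'$ equal $1$ we have $a,b\ge n>0$, so this is well defined, and it suffices to approximate each of $a,b,c$ to within a $1\pm\epsilon'$ factor for $\epsilon'=\Theta(\epsilon)$ and then output $\tilde c/\sqrt{\tilde a\tilde b}$.

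The key observation is that each of $a,b,c$ equals the sum of all entries of a Gaussian kernel matrix on $n$ points in $\R^{2d}$. Indeed, with $K_{i,j}=e^{-\|x_i-x_j\|_2^2}$ and $K'_{i,j}=e^{-\|y_i-y_j\|_2^2}$, set $z_i=(x_i,y_i)\in\R^{2d}$; then $(K\circ K')_{i,j}=K_{i,j}K'_{i,j}=e^{-\|z_i-z_j\|_2^2}$, so $c=\sum_{i,j}(K\circ K')_{i,j}=s(K_Z)$ where $K_Z$ is the Gaussian kernel matrix of $z_1,\dots,z_n$. Likewise $a=s(K_X)$ and $b=s(K_Y)$ for the Gaussian kernel matrices of the lifted points $(x_1,x_1),\dots,(x_n,x_n)$ and $(y_1,y_1),\dots,(y_n,y_n)$ in $\R^{2d}$. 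Each of $K_X,K_Y,K_Z$ is itself a Gaussian kernel matrix, hence \emph{nice} in the sense of Definition~\ref{def:nice} (PSD with $1$'s on the diagonal), and the Gaussian kernel admits the fast KDE of \citet{charikar2020kernel} with exponent $p=0.173+o(1)$ in \emph{any} dimension; replacing $\R^d$ by $\R^{2d}$ only changes the ambient dimension by a factor of $2$. This is precisely the place where the ``closed under product'' property of the Gaussian kernel is used.

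I therefore apply Theorem~\ref{t:main} with accuracy parameter $\epsilon'$ to each of $K_X,K_Y,K_Z$, obtaining $\tilde a,\tilde b,\tilde c$ with $\tilde a\in(1\pm\epsilon')a$, $\tilde b\in(1\pm\epsilon')b$, $\tilde c\in(1\pm\epsilon')c$, each in time $O\!\left(d\,n^{(2+5p)/(4+2p)}\log^2 n/\epsilon'^{(8+6p)/(2+p)}\right)=O(dn^{0.66}/\epsilon^{O(1)}\log^2 n)$ for $p=0.173+o(1)$ (cf.\ the Gaussian row of Table~\ref{tab:main}), and each correct with probability $1-1/n^{\Theta(1)}$; a union bound makes all three correct with high probability. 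Outputting $\tilde A\triangleq\tilde c/\sqrt{\tilde a\tilde b}$, the multiplicative errors combine: $\frac{1-\epsilon'}{1+\epsilon'}\,\hat{A}(K,K')\le\tilde A\le\frac{1+\epsilon'}{1-\epsilon'}\,\hat{A}(K,K')$, and both bounding factors lie in $1\pm O(\epsilon')$ for $\epsilon'\le\tfrac12$, so choosing $\epsilon'=\epsilon/c_0$ for a suitable absolute constant $c_0$ gives a $1\pm\epsilon$ estimate; this rescaling only inflates the $\epsilon^{O(1)}$ factor by a constant. The dominant cost is the three invocations of Theorem~\ref{t:main}, so the total runtime is $O(dn^{0.66}/\epsilon^{O(1)}\log^2 n)$.

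There is no genuinely hard step: the content is the identification of the Hadamard products $K\circ K$, $K'\circ K'$, $K\circ K'$ with Gaussian kernel matrices on the lifted points, together with the routine checks that these matrices are nice and that the Gaussian KDE runtime bound is dimension-agnostic; the error propagation and the union bound over the three estimates are standard. The only mild subtleties are verifying the diagonal and PSD conditions for the Hadamard products (immediate here since they are Gaussian kernel matrices, or in general via the Schur product theorem) and checking that the $o(1)$ in the Gaussian KDE exponent still keeps the resulting sum-estimation exponent at most $0.66$.
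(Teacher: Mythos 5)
Your proposal is correct and follows exactly the route the paper intends: the paper states the corollary as immediate from the Gaussian kernel being closed under product, and your argument simply fills in those details — identifying $\langle K,K\rangle$, $\langle K',K'\rangle$, $\langle K,K'\rangle$ as kernel sums of Gaussian kernel matrices on the lifted points in $\R^{2d}$, invoking Theorem~\ref{t:main} three times, and doing the standard error propagation and union bound. No gaps.
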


\section{Subquadratic Time Top Eigenvector}\label{sec:topeigvec}

We now present our top eigenvector approximation algorithm, which is a variant on the `noisy power method' with approximate matrix vector multiplication implemented through kernel density evaluation. 
Existing analysis of the  noisy power method assumes random noise on each matrix vector multiplication, which has little correlation with the top eigenvector  \citep{hardt2014noisy,balcan2016improved}. This prevents this top direction from being `washed out' by the noise. In our setting this cannot be guaranteed -- the noise distribution arising from implementing matrix multiplication with $K$ using kernel density evaluation is complex.

To avoid this issue, we use that since the kernel $k$ is nice, $K$ is entrywise non-negative, and by  the Perron-Frobenius theorem, so is its top eigenvector. Thus, if our noise in approximating $Kz$ is entrywise non-negative (i.e., if we overestimate each weighted kernel density), then the noise will have non-negative dot product with the top eigenvector, and will not wash it out, even if it is highly correlated with it. 

We formalize this analysis in Theorem  \ref{t:noisePower}, first giving the required approximate matrix vector multiplication primitive that we use in Definition \ref{def:mvm}. We give a full description of our noisy power method variant in Algorithm \ref{alg:power}. In Section \ref{sec:mvm} we discuss how to implement the matrix vector multiplication primitive efficiently using existing KDE algorithms.

\begin{definition}[Non-negative Approximate Matrix Vector Multiplication]\label{def:mvm}
An $\epsilon$-non-negative approximate MVM algorithm for a matrix $K \in \R^{n \times n}$ takes as input a non-negative vector $x \in \R^n$ and returns $y = Kx + e$ where $e$ is an entrywise non-negative error vector satisfying: $\norm{e}_2 \le \epsilon \norm{Kx}_2$.
\end{definition}

\begin{algorithm}[h]
	\caption{Kernel Noisy Power Method}
	\label{alg:power}
	{\bfseries input}: Error parameter $\epsilon \in (0,1)$. Iteration count $I$. $\epsilon^2/12$-non-negative approximate MVM algorithm (Def. \ref{def:mvm}) $\mathcal{K}(\cdot)$ for nice kernel matrix $K \in \R^{n \times n}$. \\
	{\bfseries output}: $z \in \R^n$ with $\norm{z}_2 = 1$ 
	\begin{algorithmic}[1]
	\STATE Initialize $z_0 \in \R^n$ with $z_0(i) = \frac{1}{\sqrt{n}}$ for all $i$.
	\STATE Initialize $\lambda := 0$.
	\FOR{$i = 0$ to $I$}
		\STATE $z_{i+1} := \mathcal{K}(z_{i})$.
		\IF{$z_{i}^T z_{i+1} > \lambda$}
		\STATE $z := z_{i}$.
		\STATE $\lambda := z_i^T z_{i+1}$.
		\ENDIF
		\STATE $z_{i+1} := z_{i+1}/\norm{z_{i+1}}_2.$
	\ENDFOR \\
	\STATE \textbf{return} $z$.
	\end{algorithmic}
\end{algorithm}


\begin{theorem}
\label{t:noisePower}
The kernel noisy power method (Algorithm \ref{alg:power}) run for 
     $I =  O \left (\frac{\log(n/\epsilon)}{\epsilon} \right)$ iterations outputs 
     a unit vector $z$ with $z^T K z \ge (1-\epsilon) \cdot \lambda_1(K)$.
\end{theorem}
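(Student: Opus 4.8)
The plan is to analyze Algorithm \ref{alg:power} as a perturbed power iteration, exploiting two structural facts: (i) the noise vector $e_i$ at each step is entrywise non-negative, and (ii) by Perron--Frobenius the top eigenvector $v_1$ of the nice (hence entrywise non-negative, PSD) matrix $K$ is entrywise non-negative. Consequently $\langle e_i, v_1\rangle \ge 0$ at every step, so the noise can never destructively cancel the component of the iterate along $v_1$; it can only help (or at worst be neutral) in that direction, while its total size is controlled by $\|e_i\|_2 \le (\epsilon^2/12)\|Kz_i\|_2 \le (\epsilon^2/12)\lambda_1$. First I would set up notation: write $z_i = \sum_j c_j^{(i)} v_j$ in the eigenbasis, track $\alpha_i := \langle z_i, v_1\rangle = c_1^{(i)}$ (the ``good'' component) and $\beta_i := \|z_i - \alpha_i v_1\|_2$ (the ``bad'' component), both for the normalized iterate. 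The recursion is $z_{i+1}^{\mathrm{raw}} = K z_i + e_i$, then normalize.

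The key steps, in order: (1) \textbf{Initialization.} Since $z_0(j) = 1/\sqrt n$ and $v_1$ is entrywise non-negative with $\|v_1\|_2 = 1$, we have $\alpha_0 = \langle z_0, v_1\rangle = \frac{1}{\sqrt n}\sum_j v_1(j) \ge 1/\sqrt n$ (as $\sum_j v_1(j) \ge \|v_1\|_2 = 1$ by non-negativity), so we start with a polynomially-small but nonzero overlap. (2) \textbf{Good-component growth.} $\alpha_{i+1}^{\mathrm{raw}} = \langle Kz_i + e_i, v_1\rangle = \lambda_1 \alpha_i + \langle e_i, v_1\rangle \ge \lambda_1 \alpha_i$ — this is exactly where non-negativity of both $e_i$ and $v_1$ is used, and it is the crux of the whole argument. (3) \textbf{Bad-component control.} For the orthogonal complement, $\|P_{v_1^\perp}(Kz_i + e_i)\|_2 \le \lambda_2 \beta_i + \|e_i\|_2 \le \lambda_2 \beta_i + (\epsilon^2/12)\lambda_1$. (4) \textbf{Case split on the spectral gap.} If $\lambda_2 \le (1-\epsilon)\lambda_1$, standard power-method contraction of $\beta_i/\alpha_i$ (plus the additive noise floor) drives the Rayleigh quotient of $z_i$ above $(1-\epsilon)\lambda_1$ within $O(\epsilon^{-1}\log(n/\epsilon))$ steps; the $\log n$ pays for the small initial overlap, the $\log(1/\epsilon)$ and the $1/\epsilon$ for the gap. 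If instead $\lambda_2 > (1-\epsilon)\lambda_1$, then \emph{any} unit vector with a constant fraction of its mass in the top-two eigenspace already has Rayleigh quotient $\ge (1-\epsilon)\lambda_1$, so one shows the iterates quickly land there. (5) \textbf{The bookkeeping in the algorithm.} Algorithm \ref{alg:power} does not return the final iterate but the $z_i$ maximizing $z_i^T z_{i+1} = z_i^T(Kz_i + e_i) \ge z_i^T K z_i$; since $e_i \ge 0$ entrywise this quantity lower-bounds the true Rayleigh quotient $z_i^T K z_i$ only after we also note $z_i \ge 0$... actually more carefully, $z_i^T e_i \ge 0$ because both are non-negative, so $z_i^T z_{i+1}^{\mathrm{raw}} \ge z_i^T K z_i$, and we want the reverse-ish control; I would instead argue that the tracked $\lambda$ value is sandwiched between the best true Rayleigh quotient seen and that plus $O(\epsilon^2)\lambda_1$, so returning the argmax is safe up to the $(1-\epsilon)$ slack. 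One must check all iterates stay entrywise non-negative — true since $z_0 \ge 0$, $K \ge 0$ entrywise, $e_i \ge 0$, and normalization preserves sign.

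The main obstacle I anticipate is step (4)/(5): the usual noisy-power-method analyses (Hardt--Price, Balcan et al.) crucially use that the noise is \emph{nearly orthogonal} to $v_1$, which lets them bound $\langle e_i, v_1 \rangle$ by something like $\|e_i\|_2 \cdot (\text{small})$; here $\langle e_i, v_1\rangle$ could be as large as $\|e_i\|_2 \approx (\epsilon^2/12)\lambda_1$, fully aligned. The saving grace is that this alignment is \emph{one-signed} (non-negative), so it cannot hurt $\alpha_i$ — but I must be careful that it also doesn't blow up the normalization in a way that erodes $\beta_i/\alpha_i$ control, and that the additive $(\epsilon^2/12)\lambda_1$ noise floor in the bad component, once divided by the (eventually $\Theta(1)$) good component, contributes only $O(\epsilon^2)$ to $1 - (\text{Rayleigh quotient})/\lambda_1$, hence $O(\epsilon)$ after taking square roots in the angle. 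Getting these constants to line up so that $\epsilon^2/12$-approximate MVM yields a clean $(1-\epsilon)$ guarantee, uniformly over the two cases, is the fiddly part; everything else is a routine two-component power-iteration induction.
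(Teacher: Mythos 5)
Your core ideas match the paper's proof exactly: Perron--Frobenius gives $v_1 \ge 0$ entrywise, so the non-negative noise satisfies $\langle e_i, v_1\rangle \ge 0$ and the $v_1$-component can only grow ($c_{i+1}(1) \ge \lambda_1 c_i(1)$); the initialization $c_0(1) = \tfrac{1}{\sqrt n}\|v_1\|_1 \ge \tfrac{1}{\sqrt n}$; and the argmax bookkeeping via the sandwich $z_i^T K z_i \le z_i^T z_{i+1} \le z_i^T K z_i + (\epsilon^2/12)\lambda_1$. Your worry about normalization is a non-issue: the paper simply observes that normalization does not change the ratio of the good to bad components, which is the only quantity tracked in the induction.

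The genuine gap is in your step (4). Decomposing only into $v_1$ versus $v_1^\perp$ and casing on $\lambda_2$ does not work. In your second case ($\lambda_2 > (1-\epsilon)\lambda_1$), the claim that ``any unit vector with a constant fraction of its mass in the top-two eigenspace already has Rayleigh quotient $\ge (1-\epsilon)\lambda_1$'' is false: a unit vector with half its squared mass on $v_1$ and half on an eigenvector with eigenvalue $0$ has Rayleigh quotient $\lambda_1/2$. What you actually need is that $1-O(\epsilon)$ of the squared mass lies on eigenvectors with eigenvalue $\ge (1-O(\epsilon))\lambda_1$, and proving that the iterates reach such a state requires a contraction argument against the \emph{low} part of the spectrum --- which in turn forces you to define the split at a spectral threshold rather than at a fixed index like $2$ (there may be arbitrarily many eigenvalues within $(1-\epsilon)$ of $\lambda_1$, so ``top-two'' would have to become ``top-$m$'' anyway). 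This is precisely the paper's decomposition: let $m$ be the largest index with $\lambda_m \ge (1-\epsilon/4)\lambda_1$, write $c_i = V^T z_i$, and track $c_i(1)$ against $\|c_{i,n-m}\|_2$ (the tail coordinates). Assuming for contradiction that $\|c_{i,n-m}\|_2^2 > \epsilon/4$ for all $i \le I$, the tail contracts multiplicatively by $(1-\epsilon/4)\lambda_1$ while the additive noise floor $(\epsilon^2/12)\lambda_1$ is absorbed, under the standing assumption, into a $(1-\epsilon/6)\lambda_1$ factor; hence $c_i(1)/\|c_{i,n-m}\|_2$ grows by $(1+\epsilon/6)$ per step from its initial value $1/\sqrt n$, which is impossible after $I = O(\log(n/\epsilon)/\epsilon)$ steps. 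This single threshold-based decomposition subsumes both of your cases and removes the need for any gap assumption. With that repair, the rest of your outline goes through as the paper's proof does.
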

\begin{proof}
Let $V \Lambda V^T = K$ be $K$'s eigendecomposition. $\Lambda$ is diagonal containing the eigenvalues in decreasing order $\lambda_1 \ge \ldots \ge \lambda_n \ge 0$. $V$ is orthonormal, with columns equal to the corresponding eigenvectors of $K$: $v_1,\ldots,v_n$. Let $m$ be the largest index such that $\lambda_m \ge (1-\epsilon/4) \cdot \lambda_1$.

Let $c_i =  V^T z_i$ be the $i^{th}$ iterate, written in the eigenvector basis. Let $c_{i,m}$ be its first $m$ components and $c_{i,n-m}$ be the last $n-m$ components. We will argue that for $I = O \left (\frac{\log(n/\epsilon)}{\epsilon} \right)$ iterations, there is at least one iteration $i \le I$ where $\norm{c_{i,m}}_2^2 \ge (1-\epsilon/4)$, and so $z_i$ aligns mostly with large eigenvectors. Formally this gives
\begin{align*}
z_i^T K z_i = c_i^T \Lambda c_i &\ge \norm{c_{i,m}}_2^2 \cdot (1-\epsilon/4) \cdot \lambda_1 \\
&\ge (1-\epsilon/4)^2 \cdot\lambda_1\\
&\ge (1-\epsilon/2) \cdot \lambda_1.
\end{align*}
Further, when we check to set $z := z_i$ at line (4) we have
\begin{align*}
z_{i}^T z_{i+1} = z_{i}^T \mathcal{K} z_{i}  = z_{i}^T K z_{i} + z_{i}^T e.
\end{align*}
Since $z_0$, $K$, and $e$ are all entrywise non-negative, $z_i$ is entrywise non-negative for all $i$. Thus, $ z_{i}^T e \ge 0$. By our bound on $\norm{e}_2$ we also have $z_{i}^T e \le \norm{e}_2 \le \epsilon^2/12 \cdot \norm{K z_{i-1}}_2 \le \epsilon^2/12 \cdot \lambda_1$.
Overall this gives
\begin{align*}
z_{i}^T K z_{i}  \le z_{i}^T z_{i+1} \le z_{i}^T K z_{i}  + \epsilon^2/12 \cdot \lambda_1.
\end{align*}
So, if there is an $i$ with $\norm{c_{i,m}}_2^2 \ge (1-\epsilon/4)$ and thus $z_i^T K z_i \ge (1-\epsilon/2) \cdot \lambda_1$, we will not output $z = z_j$ with  $z_j^T K z_j \le (1-\epsilon/2- \epsilon^2/12) \cdot \lambda_1 > (1-\epsilon) \cdot \lambda_1$, ensuring our final error bound.

%
%
To prove that there is some iterate with $\norm{c_{i,m}}_2^2 \ge (1-\epsilon/4)$, since $\norm{z_i}_2^2 = \norm{c_i}_2^2 = 1$, it suffices to argue that 
$\norm{c_{i,n-m}}_2^2 \le \epsilon/4$. 
Assume for the sake of contradiction that for all $i \le I$ we have $\norm{c_{i,n-m}}_2^2 > \epsilon/4$. Under this assumption we can argue that $c_i(1)^2$ grows significantly with respect to $\norm{c_{i,n-m}}_2^2$ in each step. Specifically, we can show by induction that $\frac{c_i(1)}{\norm{c_{i,n-m}}_2} \ge \frac{(1+\epsilon/6)^i}{\sqrt{n}}$. This gives a contradiction since for $I = O(\log(n/\epsilon)/\epsilon)$ it would imply that $1 \ge c_I(1)^2 \ge \frac{4}{\epsilon} \norm{c_{I,n-m}}_2^2$ and thus we must have $\norm{c_{I,n-m}}_2^2 < \epsilon/4$. This contradiction proves the theorem.



\smallskip 

\noindent\textbf{Base case.} Initially, $z_0$ has all entries equal to $1/\sqrt{n}$. Thus
\begin{align*}
\frac{c_0(1)}{\norm{c_{0,n-m}}_2} \ge c_0(1) = v_1^T z_0 &= \frac{1}{\sqrt{n}} \sum_{j=1}^n v_1(j)\\
&= \frac{1}{\sqrt{n}} \norm{v_1}_1 \ge \frac{1}{\sqrt{n}},
\end{align*}
where we use that $v_1$ is a non-negative unit vector by the Perron-Frobenius theorem so $\sum_{j=1}^n v_1(j) \ge \norm{v_1}_2 = 1$.

\smallskip 

\noindent\textbf{Inductive step.} 
Assume inductively that $\frac{c_i(1)}{\norm{c_{i,n-m}}_2} \ge \frac{(1+\epsilon/6)^i}{\sqrt{n}}$. Before normalization at step (7) we have $z_{i+1} = Kz_i + e$. 
Normalization doesn't affect the ratio between $c_{i+1}(1)$ and $ \norm{c_{i+1,n-m}}_2$ -- thus we can ignore this step. 

For all $j \in 1,\ldots,n$ we have $c_{i+1}(j) = \lambda_j \cdot c_{i}(j) + v_j^T e$. Since both $e$ and $v_1$ are all non-negative, this gives $c_{i+1}(1) \ge \lambda_1 \cdot c_{i}(1)$. Further, by triangle inequality and the fact that for $j > m$, $\lambda_j < (1-\epsilon/4)$,
\begin{align*}
\norm{c_{i+1,n-m}}_2 &\le (1-\epsilon/4) \lambda_1 \cdot \norm{c_{i,n-m}}_2 + \norm{e}_2\\
& \le (1-\epsilon/4) \lambda_1 \cdot \norm{c_{i,n-m}}_2 + \epsilon^2/12 \cdot \norm{Kz_i}_2\\
& \le (1-\epsilon/2) \lambda_1 \cdot \norm{c_{i,n-m}}_2 + \epsilon^2/12 \cdot \lambda_1.
\end{align*}
By our assumption (for contradiction) that $\norm{c_{i,n-m}}_2 \ge \epsilon/4$ we then have
\begin{align*}
\norm{c_{i+1,n-m}}_2 &\le \left (1-\epsilon/2 + \frac{\epsilon^2/12}{\epsilon/4} \right) \norm{c_{i,n-m}}_2 \cdot \lambda_1\\
&\le (1-\epsilon/6) \cdot \norm{c_{i,n-m}}_2 \cdot \lambda_1.
\end{align*}
Overall, this gives that
\begin{align*}
\frac{c_{i+1}(1) }{\norm{c_{i+1,n-m}}_2} &\ge \frac{\lambda_1 \cdot c_i(1)}{(1-\epsilon/6) \cdot  \norm{c_{i,n-m}}_2 \cdot \lambda_1}\\
&\ge (1+\epsilon/6) \cdot \frac{c_i(1)}{ \norm{c_{i,n-m}}_2} \ge \frac{(1+\epsilon/6)^{i+1}}{\sqrt{n}},
\end{align*}
by our inductive assumption. This gives our contradiction, completing the proof.
\end{proof}

\subsection{Approximate Kernel Matrix Vector Multiplication}\label{sec:mvm}

We next show how to use fast a KDE algorithm to instantiate the non-negative approximate MVM primitive (Definition \ref{def:mvm}) required by Algorithm \ref{alg:power}. A similar approach was taken by \citet{charikar2020kernel}. We provide our own analysis, which applies black box to any KDE algorithm.

\begin{theorem}\label{thm:mvm} Let $k$ be a nice kernel (Def. \ref{def:nice}) admitting $O(d m/\mu^p \epsilon^2)$ time approximate kernel density evaluation (Def. \ref{def:fast}). Let $K \in \R^{n \times n}$ be the associated kernel matrix for $n$ points in $d$ dimensions. There is an $\epsilon$-non-negative approximate MVM algorithm for $K$ running in time $O \left(\frac{dn^{1+p} \log(n/\epsilon)^p}{\epsilon^{3+2p}} \right )$.
\end{theorem}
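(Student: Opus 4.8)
The plan is to realize the non-negative approximate MVM of Definition~\ref{def:mvm} by treating each coordinate $i$ as a \emph{weighted} kernel density query $(Kx)_i=\sum_j x_j\,k(x_i,x_j)$ and reducing it to a few \emph{unweighted} KDE calls of the type Definition~\ref{def:fast} supplies, always rounding our estimates so that we overestimate; one-sidedness is exactly what produces the entrywise non-negative error $e$ that Algorithm~\ref{alg:power} relies on. Since scaling $x$ scales $Kx$ and $\norm{Kx}_2$ identically I may assume $\norm{x}_2=1$, so $\norm{x}_\infty\le1$ and, crucially, $\norm{x}_1\le\sqrt n$. Because $K$ is PSD with unit diagonal and $x\ge0$, we have $x^TKx\ge\norm{x}_2^2$, hence $\norm{Kx}_2\ge x^TKx/\norm{x}_2\ge1$. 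Fix $\theta=\Theta(\eps/\sqrt n)$. At most $n$ coordinates of $Kx$ fall below $\theta$, so overestimating each of those by $O(\theta)$ costs only $O(\sqrt n\,\theta)=O(\eps)\le O(\eps)\norm{Kx}_2$ in $\ell_2$; it therefore suffices to output $y$ with $(Kx)_i\le y_i\le(1+O(\eps))(Kx)_i+O(\theta)$ for every $i$. (Using $\norm{Kx}_2\ge\norm{x}_2$ here, rather than the weaker $\norm{Kx}_2\ge\norm{x}_\infty$, is what eventually pins the exponent at $n^{1+p}$ instead of a larger power.)

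\textbf{Weighted-to-unweighted reduction.} Discard coordinates $j$ with $x_j<\theta/n$ (their total contribution to any $(Kx)_i$ is $\le\theta$, absorbed into a slack term). Round every remaining weight up to the nearest power of $1+\eps$ and group coordinates by the rounded value, obtaining $A=O(\eps^{-1}\log(n/\eps))$ groups $G_0,\dots,G_A$ with sizes $g_a=|G_a|$, weight levels $w_a\le1$, and $\sum_a w_ag_a\le(1+\eps)\norm{x}_1\le2\sqrt n$. Within group $a$, the contribution to $(Kx)_i$ is, up to a $1+\eps$ factor from rounding, $w_ag_a\cdot(\tfrac1{g_a}\sum_{j\in G_a}k(x_i,x_j))$ — exactly an unweighted KDE on the points of $G_a$. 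Build one KDE structure per group with a \emph{group-dependent} threshold $\mu_a=\Theta(\tfrac{\eps\theta}{A\,w_ag_a})$, chosen so that any group whose contribution is a non-negligible ($\gtrsim\eps\theta/A$) fraction of $\theta$ is estimated to relative error $1+\eps$, while a group below its threshold may be dropped (its true contribution is then $<\eps\theta/A$, summing to $O(\eps\theta)$ across the $\le A$ groups). Form $y_i$ as the sum over groups of the estimated contributions, each rescaled by $1/(1-\eps)$ to make it an overestimate, plus one uniform slack $\sigma=\Theta(\theta)$ covering the discarded light coordinates and the dropped light groups; finally take a median of $O(\log n)$ independent repetitions of each KDE call and union bound over all $n$ coordinates and all groups.

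\textbf{Correctness and running time.} Entrywise non-negativity of $e=y-Kx$ holds because each retained group estimate dominates its true (non-negative) contribution and $\sigma$ dominates everything dropped; $\norm{e}_2\le\eps\norm{Kx}_2$ then follows from $e_i\le O(\eps)(Kx)_i+O(\theta)$, a quadrature sum, and $\norm{Kx}_2\ge1$. For the runtime, by Definition~\ref{def:fast} the $n$ queries against group $a$ cost $O(nd/(\mu_a^p\eps^2))$ and building it costs $O(dg_a/(\mu_a^p\eps^2))$; since $\mu_a^{-p}=O((Aw_ag_a/(\eps\theta))^p)$ and $\theta=\Theta(\eps/\sqrt n)$, summing over $a$ and using the power-mean bounds $\sum_a(w_ag_a)^p\le A^{1-p}(\sum_a w_ag_a)^p=O(A^{1-p}n^{p/2})$ and $\sum_a g_a(w_ag_a)^p=O(n^{1-p/2})$ for the relevant regime $p\le1$ (the case $p\ge1$ being only easier), the $n$-dependence collapses to $n^{1+p}$; accounting for $A$ and the $O(\log n)$-fold amplification yields the claimed $O(dn^{1+p}\log(n/\eps)^p/\eps^{3+2p})$ up to the bookkeeping of the remaining logarithmic and $\eps$ factors.

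\textbf{Main obstacle.} The delicate point is the behavior of the KDE primitive \emph{below} its threshold. Keeping the cost at $n^{1+p}$ forces the per-group thresholds $\mu_a\propto1/(w_ag_a)$ together with an accept/reject decision per group — a single global threshold would require $\mu=\Theta(\eps\theta/(An))$ and inflate the exponent to $n^{1+3p/2}$. But accepting/rejecting requires controlling what the structure returns when queried on a density below $\mu_a$: I need it to return a value that is $O(\mu_a)$ (equivalently, at most $(1+\eps)\max(\text{true density},\mu_a)$) with high probability — a mild strengthening of Definition~\ref{def:fast} that the KDE algorithms we use all satisfy, as it follows from their variance bounds plus the median trick — so that a spurious ``above threshold'' report cannot inject more than $O(\eps\theta/A)$ of overestimate into any $y_i$. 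Threading this through while preserving the one-sided error needed for non-negativity is the crux of the proof.
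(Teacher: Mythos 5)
Your proposal is correct and follows essentially the same route as the paper's proof: geometric bucketing of the weights into $O(\eps^{-1}\log(n/\eps))$ levels with upward rounding for one-sidedness, a per-bucket reduction to unweighted KDE with a bucket-dependent density threshold $\mu_a\propto 1/(w_ag_a)$, a uniform non-negative slack of order $\eps/(\poly(b)\sqrt{n})$ per coordinate to absorb tiny weights and tiny outputs, and the $\ell_2$-normalization bound (per-bucket or total weight mass $O(\sqrt{n})$) that forces $\mu_a=\Omega(\poly(\eps)/(n\,\mathrm{polylog}))$ and hence the $n^{1+p}$ runtime. The one place you go beyond the paper --- explicitly requiring the KDE primitive to return $O(\mu_a)$ on below-threshold queries so that rejected groups cannot inject overestimate --- addresses a point the paper's proof glosses over, and your handling of it is sound.
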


Combined with Theorem \ref{t:noisePower}, Theorem \ref{thm:mvm} immediately gives our final subquadratic time eigenvalue approximation result:
\begin{corollary}\label{thm:powerMethodFinal} Let $k$ be a nice kernel (Def. \ref{def:nice}) admitting $O(d m/\mu^p \epsilon^2)$ time approximate kernel density evaluation (Def. \ref{def:fast}). Let $K \in \R^{n \times n}$ be the associated kernel matrix for $n$ points in $d$ dimensions. There is an algorithm running in time $O \left(\frac{dn^{1+p} \log(n/\epsilon)^{2+p}}{\epsilon^{7+4p}} \right )$, which outputs a unit vector $z$ with $z^T K z \ge (1-\epsilon) \cdot \lambda_1(K)$
\end{corollary}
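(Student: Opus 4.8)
The plan is to obtain Corollary \ref{thm:powerMethodFinal} by composing the two results already in hand: the correctness of the kernel noisy power method (Theorem \ref{t:noisePower}) and the KDE-based realization of the non-negative approximate MVM primitive (Theorem \ref{thm:mvm}). Concretely, I would run Algorithm \ref{alg:power} for $I = O(\log(n/\epsilon)/\epsilon)$ iterations, exactly as Theorem \ref{t:noisePower} prescribes, instantiating the oracle $\mathcal{K}(\cdot)$ by the algorithm of Theorem \ref{thm:mvm}. Since Algorithm \ref{alg:power} requires an $\epsilon^2/12$-non-negative approximate MVM (Definition \ref{def:mvm}), I set the accuracy parameter of Theorem \ref{thm:mvm} to $\epsilon' = \epsilon^2/12$. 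Substituting $\epsilon'$ into its runtime and using $\log(n/\epsilon') = O(\log(n/\epsilon))$, one matrix--vector product costs $O\bigl(dn^{1+p}\log(n/\epsilon)^{p}/\epsilon^{6+4p}\bigr)$; the remaining per-iteration work in Algorithm \ref{alg:power} (forming $z_i^T z_{i+1}$, updating $\lambda$, renormalizing) is only $O(n)$ and is absorbed.

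For correctness, as soon as every call of $\mathcal{K}(\cdot)$ genuinely satisfies Definition \ref{def:mvm} with parameter $\epsilon^2/12$, Theorem \ref{t:noisePower} applies word for word, so the returned unit vector $z$ obeys $z^T K z \ge (1-\epsilon)\lambda_1(K)$. The only subtlety is that the MVM supplied by Theorem \ref{thm:mvm} is randomized while Theorem \ref{t:noisePower} is stated for an oracle that always meets its guarantee. I would handle this in the standard way: amplify each MVM call to failure probability $O(1/(n^{\Theta(1)} I))$ — this costs a logarithmic-in-$(n/\epsilon)$ factor in the per-call runtime (median of that many independent repetitions, applied coordinatewise so the $\ell_2$ error bound holds on all $n$ coordinates at once, preserving non-negativity of the error) — and then union bound over the $I$ iterations, so that with probability $1 - 1/n^{\Theta(1)}$ all matrix--vector products are simultaneously valid and Theorem \ref{t:noisePower} can be invoked.

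Multiplying the three contributions — the per-call MVM cost $O(dn^{1+p}\log(n/\epsilon)^{p}/\epsilon^{6+4p})$, the $O(\log(n/\epsilon))$ amplification factor, and the $I = O(\log(n/\epsilon)/\epsilon)$ iteration count — yields total runtime
\[
O\!\left(\frac{dn^{1+p}\log(n/\epsilon)^{2+p}}{\epsilon^{7+4p}}\right),
\]
matching the claim.

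I do not anticipate a substantive obstacle: the corollary is essentially bookkeeping over Theorems \ref{t:noisePower} and \ref{thm:mvm}. The two points that must be handled with care are (i) passing the power method the right accuracy $\epsilon^2/12$ — the squaring here is a genuine requirement of the analysis in Theorem \ref{t:noisePower}, not slack, and it is exactly what turns the MVM's $\epsilon^{-(3+2p)}$ into roughly $\epsilon^{-(7+4p)}$ once the extra $\epsilon^{-1}$ from the iteration count is included — and (ii) checking that amplifying and union-bounding the $I$ randomized MVM calls costs only a single additional logarithmic factor, so that the final polylogarithmic exponent is $2+p$ and not larger.
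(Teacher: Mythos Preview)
Your proposal is correct and follows the same route as the paper: instantiate the $\epsilon^2/12$-non-negative MVM oracle in Algorithm~\ref{alg:power} via Theorem~\ref{thm:mvm}, then multiply the per-iteration cost by the iteration count $I=O(\log(n/\epsilon)/\epsilon)$ from Theorem~\ref{t:noisePower}. The paper's own proof is even terser and does not discuss probability amplification; your extra $\log(n/\epsilon)$ factor from the union bound exactly compensates for the fact that the paper plugs in a per-MVM cost of $O(dn^{1+p}\log(n/\epsilon)^{1+p}/\epsilon^{6+4p})$ (the bound appearing at the end of the proof of Theorem~\ref{thm:mvm}) rather than the $\log(n/\epsilon)^{p}$ in that theorem's statement.
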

\begin{proof}
Algorithm \ref{alg:power} requires $I = O \left (\frac{\log(n/\epsilon)}{\epsilon} \right)$ approximate MVMs, each with error parameter $\epsilon^2/12$. By Theorem \ref{thm:mvm}, each matrix vector multiply requires time $O \left(\frac{dn^{1+p} \log(n/\epsilon)^{1+p}}{\epsilon^{6+4p}} \right )$. Multiplying by $I$ gives the final bound.
\end{proof}

\begin{proof}[Proof of Theorem \ref{thm:mvm}]
We seek an algorithm that computes $Kx + e$ where $e$ is a non-negative error vector with $\norm{e}_2 \le c\epsilon \norm{Kx}_2$ for some fixed constant $c$. Note that we can always scale $x$ so that $\norm{x}_2 = 1$, and then scale back after multiplication, without effecting the error $\epsilon$. Thus we assume going forward that $\norm{x}_2 = 1$. We can also adjust $\epsilon$ by a constant factor to have error bounded by $\epsilon \norm{Kx}_2$ rather than $c\epsilon \norm{Kx}_2$. Since $K$ has all non-negative entries and ones on the diagonal,
\begin{align*}
\norm{Kx}_2^2 = \norm{x}_2^2 + \norm{(K-I)x}_2^2 + 2 x^T (K-I)x \ge \norm{x}_2^2,
\end{align*}
since $K-I$ also have all non-negative entries. Thus $\norm{Kx}_2 \ge 1$.

\smallskip

\noindent \textbf{Rounding $x$:} 
We split the entries of $x$ into $b = c_1 \left (\frac{\log(n/\epsilon)}{\epsilon} \right )$ buckets consisting of values lying in the range $[(1-\epsilon/2)^{i},(1-\epsilon/2)^{i-1}]$ for $i = 1,...,b$ for some fixed constant $c_1$. Let $\bar x$ have all values in bucket $i$ rounded to $(1-\epsilon/2)^{i-1}$. This rounding increases each entry in $x$ by at most a $\frac{1}{1-\epsilon/2} \le 1 + \epsilon$ multiplicative factor, and so increases all values in $Kx$ also by at most a $1+ \epsilon$ multiplicative factor. Thus, $\norm{K \bar x - K x}_2 \le \epsilon \norm{Kx}_2$. By triangle inequality, it suffices to compute $z = K \bar x + e$ where $e$ is non-negative and $\norm{e}_2 \le c \epsilon \norm{K \bar x}_2$ for some constant $c$.

Let $\bar x_i \in \R^n$ be $\bar x$ with only the entries in bucket $i$ kept and the rest set to zero. Let $\bar x_{b+1} \in \R^n$ be the set of entries not falling into any bucket -- note that these entries have not been rounded. We have $K \bar x = \sum_{i=1}^{b+1} K \bar x_i$. Thus, to achieve our final error bound, it suffices to return $z = \sum_{i=1}^{b+1} z_i$ where $z_i = K \bar x_i + e_i$, for non-negative error vector $e_i$ with $\norm{e_i}_2 \le \frac{\epsilon}{b+1} \norm{K \bar x}_2$. This gives that $z = K \bar x + e$ where $e = \sum_{i=1}^{b+1} e_i$ is non-negative  and by triangle inequality has $\norm{e}_2 \le \epsilon \norm{K \bar x}_2$, satisfying the required guarantees.

\medskip

\noindent \textbf{Remainder Entries:} We first focus on the entries not lying in any bucket, $\bar x_{b+1}$.
For large enough $c_1$ (recall that $b = c_1 \left (\frac{\log(n/\epsilon)}{\epsilon} \right )$), $\bar x_{b+1}$ only includes entries with value $ < \frac{\epsilon}{(b+1)n^{3/2}}$. Since all entries of $K$ are bounded by $1$, all entries of $K \bar x_{b+1}$ are bounded by $\frac{\epsilon}{(b+1) \sqrt{n}}$. Thus, if we let $z_{b+1}$ have value $\frac{\epsilon}{(b+1) \sqrt{n}}$ in each entry, we have $z_{b+1} = K \bar x_{b+1} + e$ where $e$ is non-negative and $\norm{e}_2 \le \frac{\epsilon}{b+1} \le \frac{\epsilon}{b+1} \norm{K \bar x}_2$, as required.

\medskip

\noindent\textbf{Estimation within a Bucket:} We next consider approximating $K \bar x_i$, which amounts to kernel density evaluation with query points corresponding to all $n$ points in our data set and target points $\mathcal{X}_i$ corresponding to the non-zero entries of $\bar x_i$. We have 
\begin{align*}[K \bar x_i](j) = (1-\epsilon/2)^{i-1} \sum_{x \in \mathcal{X}_i} k(x,x_j) = |\mathcal{X}_i| (1-\epsilon/2)^{i-1} \cdot \frac{1}{|\mathcal{X}_i|}   \sum_{x \in \mathcal{X}_i} k(x,x_j).
\end{align*}
If we round any $[K \bar x_i](j)$ with value $\le \frac{\epsilon}{(b+1) \sqrt{n}}$ up to $\frac{\epsilon}{(b+1) \sqrt{n}}$, this will affect our final error by at most $\frac{\epsilon}{(b+1)} \le \frac{\epsilon}{b+1} \norm{K \bar x}_2$. Thus, to achieve our designed error, we set our 
KDE relative error to $\epsilon$ and minimum density value to $\mu = \frac{\epsilon}{(b+1) \sqrt{n}} \cdot \frac{1}{|\mathcal{X}_i| (1-\epsilon/2)^{i-1}}$. We multiply each estimate by a $1/(1-\epsilon)$ factor to ensure that it is an overestimate of the true entry in $K \bar x_i$. Since $\norm{\bar x}_2 \le (1+\epsilon) \norm{x}_2 = (1+\epsilon)$, we have $|\mathcal{X}_i| \le \min \left (n,\frac{(1+\epsilon)^2}{(1-\epsilon/2)^{2(i-1)}}\right)$. Thus, 
\begin{align*}\frac{1}{|\mathcal{X}_i| (1-\epsilon/2)^{i-1}}\ge \max \left (\frac{1}{n (1-\epsilon/2)^{i-1}} , \frac{(1-\epsilon/2)^{i-1}}{(1+\epsilon)^2} \right ) \ge \frac{1}{4\sqrt{n}}.
\end{align*}
In turn, this gives $\mu = \frac{\epsilon}{(b+1) \sqrt{n}} \cdot \frac{1}{|\mathcal{X}_i| (1-\epsilon/2)^{i-1}} \ge \frac{\epsilon}{(b+1) 4n}$.

Finally, we plug in our KDE runtime of $O(dm/\mu^p \epsilon^2)$. For each bucket, we must process $|\mathcal{X}_i|$ points to evaluate the density against. Since $\sum_{i=1}^b |\mathcal{X}_i| \le n$ the total runtime here is $O(dn/\mu^p \epsilon^2)$. We then must evaluate the density of all $n$ points against the points in each bucket, requiring total time $O(b \cdot dn/\mu^p \epsilon^2)$. Recalling that $b = O \left (\frac{\log(n/\epsilon)}{\epsilon}\right )$ and $\mu = \frac{\epsilon}{(b+1) 4n}$ gives total runtime:
\begin{align*}O \left(\frac{dn^{1+p} b^{1+p}}{\epsilon^{2+p}} \right ) = O \left(\frac{dn^{1+p} \log(n/\epsilon)^{1+p}}{\epsilon^{3+2p}} \right ).
\end{align*}
\end{proof}

\subsection{Lower Bound}

It is easy to see that $\Omega(dn)$ time is necessary to estimate $\lambda_1(K)$ to even a constant  factor. Thus, for $\epsilon = \Theta(1)$, the runtime of Corollary \ref{thm:powerMethodFinal} is tight, up to an $n^p \log(n)^{2+p}$ factor.

\begin{theorem}\label{thm:eigLower}
    Consider any nice kernel $k$ such that $k(x,y) \to 0$ as $\|x-y\| \to \infty$. Let $K \in \R^{n \times n}$ be the associated kernel matrix of $n$ points $x_1,\ldots,x_n \in \R^d$. Estimating $\lambda_1(K)$ to any constant factor requires $\Omega(nd)$ time.
 \end{theorem}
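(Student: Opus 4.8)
The plan is to prove the lower bound by a distributional (Yao) argument: I would exhibit two distributions over $n$-point inputs, one with $\lambda_1(K)\approx 1$ and one with $\lambda_1(K)\approx t$ for a constant $t=t(C)$, that are statistically indistinguishable to any algorithm reading $o(nd)$ entries of the input.

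\textbf{The gadget.} Fix a large enough constant $t=t(C)$ (e.g.\ $t=\lceil 2C^2\rceil$) and a scale $M=M(k,t)$ large enough that $\norm{x-y}\ge M$ implies $k(x,y)\le \eps'/n$ for a small constant $\eps'$ (possible since $k(x,y)\to 0$). I assume $d\ge t$ and $d=\Omega(\log n)$, the regime of interest; otherwise $\Omega(nd)=\Omega(n)$ and a two-point variant suffices. Partition $[n]$ into $n/t$ groups of size $t$. Give group $g$ a fixed anchor $a_g\in\R^d$ (say $a_g=10Mg\cdot e_1$, so anchors are pairwise $\ge 10M$ apart) and distinct \emph{secret} coordinates $r_{g,2},\dots,r_{g,t}\in\{2,\dots,d\}$ chosen uniformly at random. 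In the \emph{separated} instance, point $\ell=1$ of group $g$ is $a_g$ and point $\ell\ge 2$ is $a_g+M e_{r_{g,\ell}}$; then all $n$ points are pairwise at distance $\ge M$, so $K=I+E$ with $\norm{E}_2\le\norm{E}_F\le\eps'$, giving $\lambda_1(K)\in[1-\eps',1+\eps']$. The \emph{clustered} instance additionally picks a uniformly random group $g^*$ and resets all $t$ of its points to $a_{g^*}$; this creates a $t\times t$ all-ones principal block (using $k(a_{g^*},a_{g^*})=1$) while keeping every other pair at distance $\ge 8M$, so $\lambda_1(K)\in[t-\eps',t+\eps']$. For $\eps'$ small, $C(1+\eps')<(t-\eps')/C$, so any algorithm that estimates $\lambda_1$ within a factor $C$ distinguishes the two cases.

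\textbf{The coupling.} By Yao it suffices to fix a deterministic algorithm $\mathcal A$ that probes at most $B$ input entries and succeeds with probability $\ge 2/3$ on the even mixture of the two distributions, and to derive $B=\Omega(nd)$. Run $\mathcal A$ on the separated instance with secrets $\mathbf r$, and on the clustered instance with the same $\mathbf r$ and cluster group $g^*$: these inputs agree everywhere except at the $\le t-1$ secret entries $(\mathrm{row}(g^*,\ell),r_{g^*,\ell})$ of group $g^*$. Hence the two executions are identical — same probes, same answers, same output — unless $\mathcal A$ probes one of those secret entries. Therefore the distinguishing advantage of $\mathcal A$ is at most $\Pr[\,\mathcal A\text{ probes a secret entry of group }g^*\text{ when run on the separated instance}\,]$, where $g^*$ is uniform and independent of $\mathbf r$.

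\textbf{Bounding the hitting probability.} This is where the anchor-plus-hidden-flip design matters, and it is the delicate step. Before $\mathcal A$ hits any secret entry of group $g^*$, every value it has seen in that group's rows agrees with the clustered instance, which does not depend on $r_{g^*,\bullet}$ at all; so, conditioned on not having hit yet, $r_{g^*,\ell}$ is uniform over $\{2,\dots,d\}$ minus the columns already probed in row $\mathrm{row}(g^*,\ell)$ (each such probe returned the non-secret value, ruling that column out). Thus, after $p$ probes in that row, the next one hits the secret with probability $\le 1/(d-1-p)$; over a row receiving $q\le d/2$ probes the chance of ever hitting is $O(q/d)$, and trivially $\le 1$ otherwise (but then that row alone costs $>d/2$ probes). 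Since $\mathcal A$ makes $\le B$ probes total, revealing the secrets lazily and summing this bound over the $n/t$ groups gives a distinguishing advantage of $O(tB/(nd))$. Forcing this to be $\ge 1/6$ yields $B=\Omega(nd/t)=\Omega(nd)$ (treating $C$, hence $t$, as a constant), and reading $B$ entries takes $\Omega(B)$ time.

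\textbf{Main obstacle.} The only real subtlety is the last step: the number of probes a flipped row of $g^*$ receives is itself a random variable correlated with whether and when its secret is hit, so the ``$O(q/d)$ per row, sum over rows, average over groups'' estimate must be carried out via a careful conditioning (lazily revealing the secrets so each fresh probe in a flipped row hits with an independent, explicitly bounded probability), not a naive expectation swap. The remaining ingredients — choosing $M$ large as a function of $k$, and ensuring $d\ge t$ (and $d=\Omega(\log n)$, or using fixed well-separated anchors) so the separated instance really has all pairwise kernel values $\le\eps'/n$ — are routine.
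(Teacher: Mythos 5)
Your proposal is correct, but it takes a genuinely different---and more fine-grained---route than the paper's. The paper's proof is a two-line duplicate-detection reduction: one distribution has all points far apart (so $K \approx I$ and $\lambda_1 \approx 1$), the other hides a random set of $c$ coincident points (so $\lambda_1 \approx c$), and any algorithm reading $o(n/c)$ points misses the duplicate set with good probability; the $\Omega(nd)$ bound then follows by charging $d$ time per point read. That last step implicitly assumes the algorithm inspects whole points: if individual coordinates may be probed, the paper's instance does not by itself force the $d$ factor (a single coordinate per point could locate duplicates among generic points). Your construction closes exactly this gap: within each group the points agree with the anchor except in one hidden coordinate, so the separated and clustered inputs differ in only $t-1$ secret entries, and the Yao/coupling argument with lazy revelation shows that any deterministic algorithm making $B$ coordinate probes hits a secret entry of the random group $g^*$ with probability $O(tB/(nd))$, forcing $B = \Omega(nd)$ in the stronger entry-probe model. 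The conditioning issue you flag as the delicate step is handled exactly as you sketch: conditioning on the other secrets of the group, each fresh probe into a secret row hits with probability at most $1/(d-t+1-p)$ after $p$ prior probes in that row, rows receiving more than $(d-t+1)/2$ probes are charged directly, and summing conditional hit probabilities over probes bounds the total hitting probability by $O(B/(d\cdot n/t))$ after averaging over $g^*$. Two small repairs: the separation scale must make far pairs have kernel value at most $\epsilon'/n$, so $M$ depends on $n$ (and $k$), not just on $t$; and the assumption $d=\Omega(\log n)$ is unnecessary since your anchors are fixed and deterministic---only $d \ge t$ is needed, with the trivial $d=O(1)$ fallback otherwise, as you note. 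In exchange for more machinery, your argument establishes the theorem against algorithms that read individual input entries, whereas the paper's argument buys brevity at the cost of implicitly assuming whole-point reads.
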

 \begin{proof}
 Let $c$ be any constant. 
 Consider two input cases. In the first, no two points in $x_1,\ldots,x_n$  are identical. In the second, a random set of $c$ points are exact duplicates of each other. Scaling up these point sets by an arbitrarily large constant value, we can see that their kernel matrices are arbitrarily close to $K = I$ in the first case and $K = I + E$ in the second, where $E$ has a 1 at positions $(i,j)$ and $(j,i)$ if $i,j$ are in the duplicate set, and zeros everywhere else. We can check that $\lambda_1(I) = 1$ while $\lambda_1(I+E) = c$. Thus, to approximate the top eigenvalue up to a $c$ factor we must distinguish the two cases. If we read $o(n/c)$ points, then with good probability we will see no duplicates. Thus, we must read $\Omega(n/c)$ points, requiring $\Omega(nd/c)$ time.
 \end{proof}

\section{Empirical evaluation}\label{sec:exp}
In this section we empirically evaluate the kernel noisy power method (Algo. \ref{alg:power}) for approximating the top eigenvalue of the kernel matrix $K\in\R^{n\times n}$ associated with an input dataset of $n$ points. 

The reason we focus our evaluation on approximating the top eigenvector (Section \ref{sec:topeigvec}), and not on approximating the kernel matrix sum (Section \ref{sec:kernelsum}) is that the latter problem admits a fast algorithm by vanilla random sampling (in time $O(n)$, see Claim \ref{clm:unisampling}), which is very fast in practice. We observed this baseline method obtains comparable empirical running times to our method. We thus focus our experiments on the kernel power method. All previous algorithms for approximating the top eigenvector have running time $\Omega(n^2)$, which is very slow in practice even for moderate $n$, raising a stronger need for empirical improvement.

For evaluating the kernel noisy power method, we use the Laplacian kernel, $k(x,y)=\exp(-\norm{x-y}_1/\sigma)$. It is a nice kernel (by Def.~\ref{def:nice}), and furthermore, a Fast KDE implementation for it (as per Def.~\ref{def:fast}) with $p=0.5$ was recently given in~\cite{backurs2019space}, based on the Hashing-Based Estimators (HBE) technique of~\cite{charikar2017hashing}. This can be plugged into Corollary~\ref{thm:powerMethodFinal} to obtain a provably fast and accurate instantiation of the kernel noisy power method. The resulting algorithm is referred to in this section as~\textbf{KNPM}.

\paragraph{Evaluated methods.}
We compare KNPM to two baselines: The usual (\textbf{Full}) power method, and a~\textbf{Uniform} noisy power method. The latter is similar to KNPM, except that the KDE subroutine in Corollary~\ref{thm:powerMethodFinal} is evaluated by vanilla uniform sampling instead of a Fast KDE.

In more detail, denote by $UniKDE(y,X)$ a randomized algorithm for $KDE(y,X)$ on a point $y$ and a pointset $X$, that draws a uniformly random sample $X'$ of $X$ and returns the mean of $k(y,x)$ over all $x\in X'$. By Bernstein's inequality, if the sample size is $\Omega(1/(\mu\epsilon^2))$ then this returns a $(1\pm\epsilon)$-approximation of $KDE(y,X)$
in time $O(d/(\mu\epsilon^2))$. Therefore, it is a Fast KDE algorithm as defined in Definition \ref{def:fast}, with $p=1$. This algorithm is indeed used and analyzed in many prior works on KDE approximation. Since Algorithm \ref{alg:power} reduces the kernel power method to a sequence of approximate KDE computations (cf.~Corollary \ref{thm:powerMethodFinal}), we may use $UniKDE$ for each of them, thus obtaining the natural baseline we call Uniform. While it asypmtotically does not lead to sub-quadratic running time, it empirically performs significantly better than the full power method, as our experiments will show.

\paragraph{Evaluation metrics.}
The computational cost of each algorithm is measured by the number of kernel evaluations performed (i.e., how many entries of $K$ are computed). 
Computed kernel values are not carried over across iterations. 
The full power methods computes the entire matrix  in each iteration, since it is too large to be stored in memory. The other two methods use a small sample of entries.
We use the number of kernel evaluations as a proxy for the running time since:
\begin{itemize}
\item The three algorithms we evaluate compute power method by a sequence of kernel computations, differing in the choice of points for evaluation (Full computes $k(x,y)$ for all $x,y$, while Uniform and KNPM choose pairs at random according to their different sampling schemes). Thus this measure of efficiency allows for a direct comparison between them.
\item The measure is software and architecture-free, unaffected by access to specialized libraries (e.g., BLAS, MATLAB) or hardware (e.g., SIMD, GPU). This is important with linear algebraic operations, which behave very differently in different environments, resulting in  artifacts when measuring runtimes.
\item Compatibility with prior literature (e.g., \cite{backurs2019space}).
\end{itemize}
Nonetheless, we believe that methodologically sound wall-clock time experiments would be valuable, and we leave this for future work.

The accuracy of each method is evaluated in each iteration by the relative error, $1-z^TKz/\lambda_1(K)$, where $z$ is the unit vector computed by the algorithm in that iteration, and $\lambda_1(K)$ is the true top eigenvalue. $\lambda_1(K)$ is computed by  letting the full power method run until convergence. This error measure corresponds directly to $\epsilon$ from Corollary~\ref{thm:powerMethodFinal}.

\vspace{-.2em}
\paragraph{Datasets.}
We use classes of the Forest Covertype dataset~\citep{blackard1999comparative}, which is a 54-dimensional dataset often used to evaluate kernel methods in high dimensions \citep{siminelakis2019rehashing,backurs2019space}. We use 5 of the 7 classes (namely classes 3--7), whose sizes range from 2.7K to 35.7K points. We have omitted the two larger classes since we could not compute an accurate groundtruth $\lambda_1(K)$ for them, and hence could not measure accuracy. We also use the full training set of the MNIST dataset (60K points in 784 dimensions).


\vspace{-.2em}
\paragraph{Parameter setting.}
We use bandwidth $\sigma=0.05$ (other choices produce  similar results). The full power method has no parameters. The Uniform and KNPM  methods each have a single parameter that governs the sampling rate. For both, we start with a small sampling rate, and gradually increase it by multiplying by $1.1$ in each iteration. In this way, the approximate matrix multiplication becomes more accurate as the method converges closer to the true top eigenvalue.

\vspace{-.2em}
\paragraph{Results.} All results are reported in Figure~\ref{fig:knpm} on the following page. Both the Uniform and KNPM variants of the noisy power method give a much better tradeoff in terms of accuracy vs. computation than the full power method. Additionally,  KNPM consistently outperforms Uniform.

\vspace{-.5em}
\section{Conclusion}
We have shown that fast kernel density evaluation methods can be used to give much faster algorithms for approximating the kernel matrix sum and its top eigenvector. Our work leaves open a number of directions. For  top eigenvector computation -- it is open if the gaps between the linear in $n$ lower bound of Theorem \ref{thm:eigLower} and our slightly superlinear runtimes for the Gaussian and exponential kernels can be closed. Extending our techniques to approximate the top $k$ eigenvectors/values in subquadratic time would also be very interesting, as this is a key primitive in kernel PCA and related methods. 
Finally, it would be interesting to identify other natural kernel matrix problems that can be solved in sublinear or subquadratic time using fast KDE methods. Conversely, one might hope to prove lower bounds ruling this out. Lower bounds that hold  for  error $\epsilon = \Theta(1)$ would be especially interesting -- known lower bounds against e.g., subquadratic time algorithms for the kernel sum, only hold when  high accuracy  $\epsilon=\exp(-\omega(\log^2 n))$ is demanded. 

\begin{figure*}[ht] 
  \begin{subfigure}
    \centering
    \includegraphics[width=0.5\linewidth]{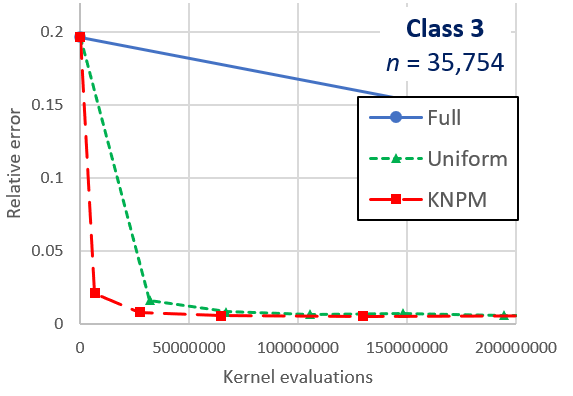} 
    \vspace{4ex}
  \end{subfigure}
  \begin{subfigure}
    \centering
    \includegraphics[width=0.5\linewidth]{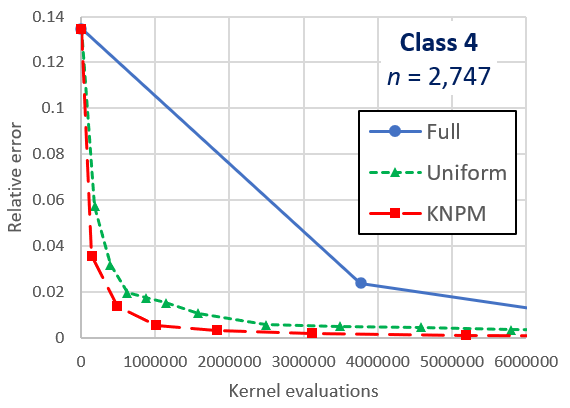} 
    \vspace{4ex}
  \end{subfigure} 
  \begin{subfigure}
    \centering
    \includegraphics[width=0.5\linewidth]{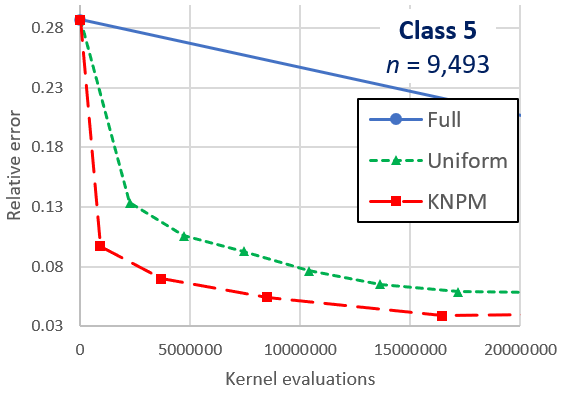} 
  \end{subfigure}
  \begin{subfigure}
    \centering
    \includegraphics[width=0.5\linewidth]{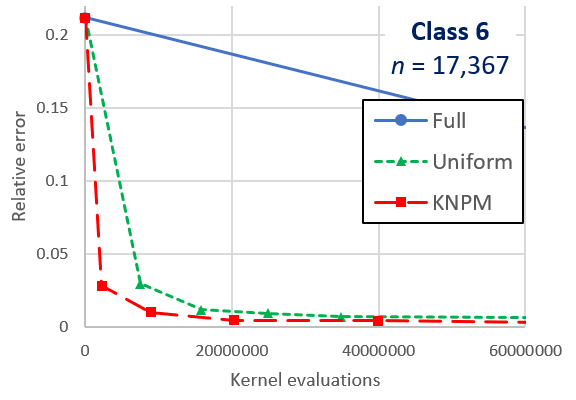} 
  \end{subfigure} 
    \begin{subfigure}
    \centering
    \includegraphics[width=0.5\linewidth]{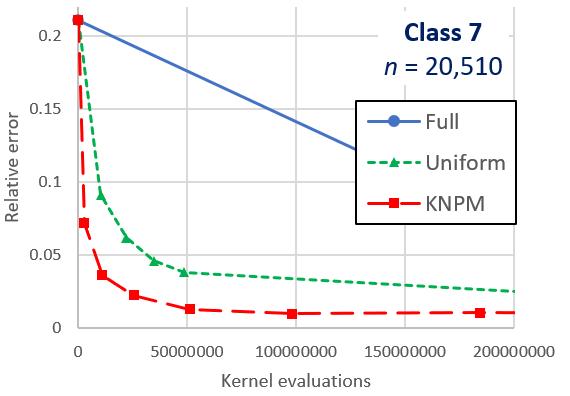} 
  \end{subfigure}
  \begin{subfigure}
    \centering
    \includegraphics[width=0.49\linewidth]{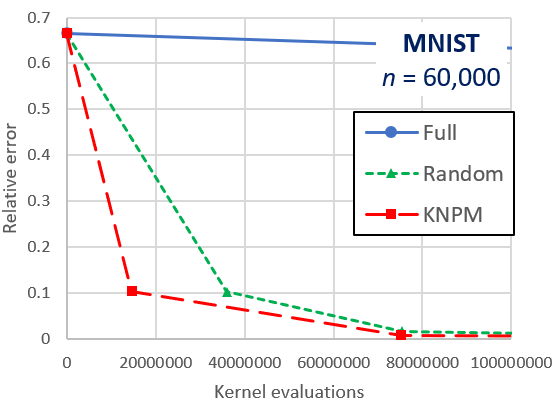} 
  \end{subfigure} 
  \caption{Results for the Full, Uniform and Kernel Noisy variants of the power method, on classes 3--7 of the Forest Covertype dataset, and on the MNIST dataset. We can see that the noisy power method implemented with KDE (KNPM) achieves a significantly better tradeoff between accuracy  and number of kernel evaluations, as compared to the baselines.}
  \label{fig:knpm} 
\end{figure*}


\section*{Acknowledgments} This  research  was  supported  in  part  by  the  NSF  TRIPODS  program  (awards  CCF-1740751  and DMS-2022448);  NSF  award  CCF-2006798; NSF award CCF-2006806; MIT-IBM Watson collaboration; a  Simons Investigator Award; NSF awards IIS-1763618 and CCF-2046235; and an Adobe Research grant.
\clearpage

\bibliographystyle{plainnat}
\bibliography{kernelApprox}

\clearpage
\appendix

\section{Simpler Algorithm for the Fast Gauss Transform}

In this section we provide a simpler version of the algorithm of~\cite{greengard1991fast} for approximate Gaussian kernel density estimation in low dimensions. This version uses ideas that are similar to those in the original algorithm, but reduces it to the necessary essentials, removing e.g., the use of Hermite polynomials. 

\begin{theorem}
    Given $n$ points $p_1, \ldots, p_n$, we can preprocess them in $n \log(1/\eps)^{O(d)}$ time so that we can answer queries of the form ${1 \over n}\sum_i k(q,p_i)$ for $k(q,p)=\exp(-\|q-p\|^2)$ within an additive factor of $\eps$ in $\log(1/\eps)^{O(d)}$ query time.
\end{theorem}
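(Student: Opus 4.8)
The plan is to implement the Fast Gauss Transform via a space-partitioning argument: cover the relevant region of $\R^d$ with a grid of small cells, observe that for a fixed cell the Gaussian kernel is essentially constant up to low-order Taylor corrections, and precompute aggregate statistics per cell that let each query be answered by inspecting only a bounded neighborhood of cells. First I would reduce to the bounded case: since $k(q,p)=\exp(-\|q-p\|^2) \le \eps/n$ whenever $\|q-p\|^2 \ge \log(n/\eps)$, any source point at distance more than $R \eqdef \sqrt{\log(n/\eps)}$ from the query contributes at most $\eps$ total to ${1\over n}\sum_i k(q,p_i)$ and can be ignored. So it suffices, for each query $q$, to handle source points within distance $R$.

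Next I would lay down an axis-aligned grid with cell side length $\delta = c/R$ for a small constant $c$, so that within any single cell the pairwise squared distances vary by only $O(1/R^2) \cdot R = O(1)$ — more precisely, if $q$ and $p$ lie in cells with centers $u$ and $v$, write $q = u + a$, $p = v + b$ with $\|a\|,\|b\| = O(\delta)$, and expand
\begin{align*}
\|q-p\|^2 = \|u-v\|^2 + 2\inprod{u-v, a-b} + \|a-b\|^2.
\end{align*}
The point is that $\exp(-\|q-p\|^2) = \exp(-\|u-v\|^2) \cdot \exp(-2\inprod{u-v,a-b}) \cdot \exp(-\|a-b\|^2)$, and because $\|u-v\| \le 2R$ and $\|a-b\| = O(\delta) = O(1/R)$, the exponent $2\inprod{u-v,a-b}$ has magnitude $O(1)$. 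To get $\eps$ additive accuracy I would Taylor-expand the offending factor $\exp(-2\inprod{u-v,a-b})$ — as a polynomial in the coordinates of $a$ and $b$ jointly — to degree $t = O(\log(1/\eps))$, incurring error $\eps$-small per term and hence $O(\eps)$ overall after summing the $\le n$ contributions (one rescales $\eps$ by a constant/polylog factor at the outset). This expansion is a sum of $t^{O(d)}$ monomials, and crucially each monomial factors into a function of $(u,a)$ times a function of $(v,b)$. So for each source cell $C_v$ I precompute, for every multi-index $\alpha$ with $|\alpha| \le t$, the moment $M_{v,\alpha} \eqdef \sum_{p_i \in C_v} (p_i - v)^\alpha$; there are $n$ points distributed among the cells and $t^{O(d)}$ moments per occupied cell, giving $n \cdot t^{O(d)} = n\log(1/\eps)^{O(d)}$ preprocessing time.

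To answer a query $q$ in cell $C_u$, I enumerate the $O((R/\delta)^d) = \log(n/\eps)^{O(d)}$ cells $C_v$ within distance $R$, and for each, combine the precomputed moments $M_{v,\alpha}$ with the query-dependent coefficients coming from $\exp(-\|u-v\|^2)$ and the expansion of $\exp(-2\inprod{u-v, a - b})\exp(-\|a-b\|^2)$ in the $(a)$-variables; summing over cells and dividing by $n$ gives the estimate within additive $\eps$. The query cost is (number of nearby cells) $\times$ (moments per cell) $= \log(n/\eps)^{O(d)} \cdot \log(1/\eps)^{O(d)} = \log(n/\eps)^{O(d)}$, and since the theorem statement writes the bound as $\log(1/\eps)^{O(d)}$ one absorbs the $\log n$ by noting $R^2 = \log(n/\eps)$ only enters polynomially into the cell count, or simply states the bound as $\log(n/\eps)^{O(d)}$ as the preprocessing line already does. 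I would also handle the factor $\exp(-\|a-b\|^2)$ the same way (Taylor expand to degree $O(\log(1/\eps))$, it too factors across the two cells after expanding $\|a-b\|^2 = \|a\|^2 - 2\inprod{a,b} + \|b\|^2$). The main obstacle is bookkeeping the three nested approximations — truncating far points, the grid discretization, and the polynomial truncation of the cross-term — so that all additive errors are simultaneously $O(\eps)$ while keeping the polynomial degree at $O(\log(1/\eps))$; the subtlety is that the cross-term exponent $2\inprod{u-v,a-b}$ can be as large as a constant, so a fixed-degree Taylor remainder bound (rather than a naive first-order one) is needed, and one must verify the remainder of $\exp(z)$ truncated at degree $t$ is $\eps$-small for $|z| = O(1)$, which holds for $t = O(\log(1/\eps))$.
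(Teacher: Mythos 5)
Your overall template (truncate far points, grid the space, Taylor-expand, and precompute per-cell quantities that factor each term into a query part times a data part) is sound, but there is one concrete flaw: your truncation radius is needlessly conservative, and it leaks a $\log n$ into the query time that the theorem does not allow. The goal is an additive-$\eps$ approximation of the \emph{average} $\frac{1}{n}\sum_i k(q,p_i)$, so it suffices to discard every point with $k(q,p_i)\le \eps$, i.e.\ $\|q-p_i\|^2\ge \log(1/\eps)$: the discarded points contribute at most $\frac{1}{n}\cdot n\cdot\eps=\eps$ in total. By instead demanding $k(q,p_i)\le \eps/n$ per point you set $R^2=\log(n/\eps)$, and with cell side $\delta=c/R$ the number of cells inspected per query is $(R/\delta)^d=(R^2/c)^d=\log(n/\eps)^{O(d)}$. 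That is not $\log(1/\eps)^{O(d)}$ (for fixed $\eps$ and growing $n$ it is $\log(n)^{O(d)}$ versus a bound independent of $n$), and neither of your suggested dismissals works: the $\log n$ is not ``absorbed'' by anything, and the preprocessing bound in the statement is $n\log(1/\eps)^{O(d)}$, which likewise contains no $\log n$. The repair is immediate -- take $R^2=\log(1/\eps)$, exactly the truncation the paper uses -- after which all of your bounds become $\log(1/\eps)^{O(d)}$ as required. A smaller point to state explicitly: with cells of side $c/R$ the cell diameter is $c\sqrt{d}/R$, so the cross term $2\inprod{u-v,a-b}$ is $O(\sqrt{d})$ rather than $O(1)$; this only raises the Taylor degree to $O(\sqrt{d}+\log(1/\eps))$ and is harmless under the $\log(1/\eps)^{O(d)}$ accounting, but the remainder bound should be checked at that scale (and, as you note, the exponent can be positive, so the $e^{|z|}$ factor must be carried).

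Modulo this, your route genuinely differs from the paper's. You follow the classical Fast Gauss Transform scheme: fine cells of side $\Theta(1/R)$, expansion around cell centers, truncation of the cross factor $\exp(-2\inprod{u-v,a-b})$ (and of $\exp(-\|a-b\|^2)$), and per-cell offset moments $M_{v,\alpha}$. The paper avoids center expansions entirely: it uses coarse cells of diameter $O(\log(1/\eps))$ so that every surviving point has $\|q-p_i\|^2=O(\log(1/\eps))$ and only $2^{O(d)}$ cells are relevant by a volume argument, Taylor-expands $\exp(-\|q-p_i\|^2)$ itself to degree $O(\log(1/\eps))$ (the argument is negative, so the Lagrange remainder carries a factor at most $1$ even though the argument is logarithmically large), and handles each power $\|q-p_i\|^{2r}$ by writing $(q\cdot p_i)^r=X(q)\cdot Y(p_i)$ with $r^{O(d)}$-dimensional feature maps and precomputing $\sum_i Y(p_i)$ per cell. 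Both yield $n\log(1/\eps)^{O(d)}$ preprocessing and $\log(1/\eps)^{O(d)}$ query time once the truncation radius is set correctly; the paper's version trades your three nested approximations (far-point cutoff, center offsets, cross-term truncation) for a single global expansion, which is why it needs no bookkeeping of cell-center geometry, while yours keeps the expanded quantity $O(1)$ in magnitude at the cost of that bookkeeping.
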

\begin{proof}
    We first note that, if $\|q-p_i\|^2 \ge \log(1/\eps)$, then we can discard such points and this changes the average of the kernel values by at most an additive factor of $\eps$.
    
    In the preprocessing step we partition the space into $d$-dimensional hypercubes of \emph{diameter} $\log(1/\eps)$. When a query point $q$ comes, we only examine those hypercubes (and the points of the dataset within) that are at distance at most $5\log(1/\eps)$  from the query point $q$. This ensures that don't ignore any point $p_i$ that is at distance less than $\log(1/\eps)$ from $q$. Furthermore, the total volume of all inspected hypercubes is less than the volume of a $d$-dimensional Euclidean ball of radius $10\log(1/\eps)$. A simple volume argument shows that such a Euclidean ball can contain at most $2^{O(d)}$ hypercubes of diameter $\log(1/\eps)$. Thus, our algorithm will only examine at most $2^{O(d)}$ hypercubes. Since this is asymptotically negligible compared to the promised query time, from now on we will assume that all points $p_i$ are satisfy $\|q-p_i\|^2 \leq 10\log(1/\eps)$.
    
    We calculate the Taylor expansion of $\exp(-\|q-p_i\|^2)$. We denote $x=-\|q-p_i\|^2$ obtaining:
    $$
        \exp(-\|q-p_i\|^2)=\exp(x)=1 + x + x^2/2! + x^3 / 3! + \ldots
    $$
    Since $|x| \leq 10 \log(1/\eps)$, we can truncate the expansion after $O(\log(1/\eps))$ terms such that the truncation error is at most $\eps$. Consider a term $x^r/r!$ with $r \leq O(\log(1/\eps))$. Since $r \leq O(\log(1/\eps))$, it suffices to have an $r^{O(d)}$ time algorithm (to get the promised final runtime) to answer the queries of the form $\sum_i \|q-p_i\|^{2r}$. This is sufficient because then we can do separate queries for all $r\leq O(\log(1/\eps))$ and combine them according to the Taylor expansion to approximate $\exp(-\|q-p_i\|^2)$.
    
    To answer the queries of the form $\sum_i \|q-p_i\|^{2r}$ efficiently, it is sufficient to answer the queries of the form $\sum_i(q \cdot p_i)^r$ efficiently, where $q \cdot p_i = q(1) p_i(1) + \ldots + q(d) p_i(d)$ denotes the inner product between the two vectors.
    
    To achieve a fast query algorithm we observe that
    $$
        (q \cdot p_i)^r = ( q(1) p_i(1) + ... + q(d) p_i(d) )^r
    $$
    consists of $r^{O(d)}$ different monomials after opening the parentheses.
    Thus, we can write
    $$
        (q \cdot p_i)^r = X(q) \cdot Y(p_i)
    $$
    for some efficiently computable maps $X, Y : \R^d \to \R^{r^{O(d)}}$.
    We get that
    $$
        \sum_i (q \cdot p_i)^r = X(q) \cdot \sum_i Y(p_i).
    $$
    It remains to note that we can precompute quantities $\sum_i Y(p_i)$ for every hypercube in the preprocessing step and then examine the relevant hypercubes during the query step.
\end{proof}

    \paragraph{Note.} Similar algorithms work for kernels $\log\|q-p_i\|$, $1/(1+\|q-p_i\|^t)$ etc.:
    do Taylor expansion and precompute the quantities $\sum_i Y(p_i)$.


\end{document}